\newtheorem{proposition}{Proposition}
\begin{document}

\title{\huge Dynamic UAV Swarm Collaboration for Multi-Targets Tracking under Malicious Jamming: Joint Power, Path and Target Association Optimization}
%
%
%

\author{Lanhua~Xiang,
        Fengyu~Wang,~\IEEEmembership{Member,~IEEE,}
        Wenjun~Xu,~\IEEEmembership{Senior Member,~IEEE,}
        Tiankui~Zhang,~\IEEEmembership{Senior Member,~IEEE,}
        Miao~Pan,~\IEEEmembership{Senior Member,~IEEE,}
        and~Zhu~Han,~\IEEEmembership{Fellow,~IEEE}
\thanks{Lanhua Xiang and Fengyu Wang are with the School of Artificial Intelligence, Beijing University of Posts and Telecommunications, Beijing 100876, China.}
\thanks{Wenjun Xu is with the Key Laboratory of Universal Wireless Communications, Ministry of Education, School of Artificial Intelligence, Beijing University of Posts and Telecommunications, Beijing 100876, China,
and also with the Frontier Research Center, Peng Cheng Laboratory, Shenzhen 518066, China. (e-mail: wjxu@bupt.edu.cn)}
\thanks{Tiankui Zhang is with the School of Information and Communication Engineering, Beijing University of Posts and Telecommunications, Beijing 100876, China. (e-mail: zhangtiankui@bupt.edu.cn)}
\thanks{Miao Pan is  with the Department of Electrical and Computer Engineering, University of Houston, Houston, TX 77004 USA.}
\thanks{Zhu Han is with the Department of Electrical and Computer Engineering, University of Houston, Houston, TX 77004 USA, and also with the Department of Computer Science and Engineering, Kyung Hee University, Seoul 446-701, South Korea.}
\thanks{Corresponding authors: Wenjun Xu.}        }
\maketitle

\begin{abstract}
In this paper, the multi-target tracking (MTT) with an unmanned aerial vehicle (UAV) swarm is investigated in the presence of jammers, where UAVs in the swarm communicate with each other to exchange information of targets during tracking. The communication between UAVs suffers from severe interference, including inter-UAV interference and jamming, thus leading to a deteriorated quality of MTT. To mitigate the interference and achieve MTT, we formulate a interference minimization problem by jointly optimizing UAV's sub-swarm division, trajectory, and power, subject to the constraint of MTT, collision prevention, flying ability, and UAV energy consumption. Due to the multiple coupling of sub-swarm division, trajectory, and power, the proposed optimization problem is NP-hard. To solve this challenging problem, it is decomposed into three subproblems, i.e., target association, path plan, and power control. First, a cluster-evolutionary target association (CETA) algorithm is proposed, which involves dividing the UAV swarm into the multiple sub-swarms and individually matching these sub-swarms to targets. Second, a jamming-sensitive and singular case tolerance (JSSCT)-artificial potential field (APF) algorithm is proposed to plan trajectory for tracking the targets. Third, we develop a jamming-aware mean field game (JA-MFG) power control scheme, where a novel cost function is established considering the total interference. Finally, to minimize the total interference, a dynamic collaboration approach is designed. Different from traditional alternative iteration algorithms, our proposed dynamic collaboration approach triggers the updates of the sub-swarm division and UAV trajectory, and periodically updates the transmission power. Simulation results validate that the proposed dynamic collaboration approach reduces average total interference, tracking steps, and target switching times by $28\%$, $33\%$, and $48\%$, respectively, comparing to existing baselines.
\end{abstract}

\begin{IEEEkeywords}
Multi-target tracking, target association, artificial potential field, mean field game, dynamic collaboration.
\end{IEEEkeywords}

%
\IEEEpeerreviewmaketitle

\section{Introduction}\label{section1}
%
%
%
%

\IEEEPARstart{T}{he} past decades have witnessed the flourish of unmanned aerial vehicles (UAVs) in many industries, since UAVs can provide both ubiquitous tracking and communication functions thanks to their small size, easy deployment, and high flexibility~\cite{07}. The drawbacks of a single UAV are limited energy, short range of communication, low bandwidth, and handover\cite{08}, which makes it unable to complete complex tasks. To perform various complex tasks, UAVs can constitute swarms by cooperating with each other\cite{46}. For example, for traffic monitoring, event tracking, military reconnaissance, or missile tracking, a swarm of UAVs can not only track multiple targets simultaneously, but also communicate with each other to share their acquired information of targets.

Most of existing works aim to enhance the performance of UAV swarm or multi-UAV networks from the perspective of tracking or communications, separately. As for improving multi-target tracking (MTT) performance\cite{34,35,36,45}, a cooperative tracking architecture is designed for UAV swarm to save energy consumption, reduce execution latency, and increase trajectory prediction accuracy in~\cite{35}. A decentralized control method is proposed to guide UAV swarms for single target tracking and MTT in \cite{45}. Note that the communication between UAVs to exchange information of targets is highly susceptible to all kinds of interferences, and thus affects the tracking quality. However, most works ignore the impact of communication on MTT performance.

A UAV suffers from severe inter-UAV interference when UAVs communicate with others in the swarm during MTT. Besides, the communication quality between UAVs is particularly vulnerable to jamming\cite{29}. Recently, there are growing research efforts for interference management in multi-UAV communication networks\cite{31,32,33,06,48}. In \cite{31}, a joint control scheme of trajectory and power is developed to promote the sum data transmission rate of the UAV-enabled interference channel for a given flight interval. To support reliable communication between the UAV swarm and the ground control unit while effectively reducing the impact of adjacent channel interference and the external interference, a priority-based resource coordination scheme is proposed to enhance the minimum signal-to-interference-plus-noise ratio (SINR) among multiple UAVs in \cite{32}. The malicious jamming is considered in \cite{02}, where a joint optimization approach of UAVs' trajectories and clustering is proposed to significantly improve the sum achievable rate for the UAV swarm. 

To promote the performance of UAV swarm from both tracking and communications, our preliminary work proposes a dual-field-cooperation approach to jointly adjust trajectory and power for UAV swarm during target tracking\cite{06}, which only considers a single target tracking scenario and inter-UAV interference. To the best of our knowledge, the performance optimization of tracking and communication for a multi-UAV MTT system with jamming is unexplored in the existing literature. To this end, this paper studies an interference minimization problem in a MTT scenario where a UAV swarm tracks multiple targets under dynamic jamming.

Compared to \cite{06}, the introduction of multi-target and dynamic jamming makes the interference problem more complex and challenging. The critical challenges to mitigating the inter-UAV interference and jamming for MTT are analyzed as follows. {\em First}, during MTT, the UAV's associated target, trajectory, and power can affect the inter-UAV interference and jamming, and thus influence the quality of MTT. Specifically, driven by a MTT mission, UAVs need to associate with the target before tracking. Different target association results, containing different sub-swarm division and match results, lead to different trajectories of UAVs. Meanwhile, the trajectory determines the distance among UAVs and the distance between UAVs and jammers, changing the inter-UAV interference and jamming\cite{30}. Moreover, power is an important element in affecting interference. In such a large-scale scenario with multiple UAVs, it is difficult to propose an effective distributed power control policy to reduce the interference\cite{20}. {\em Second}, the UAVs' associated target, trajectory, and power are not independent of each other. Specifically, UAV's associated target directly determines its trajectory, and the trajectory affects the power control through path loss, which makes interference mitigation more complex. {\em Third}, in this highly dynamic scenario, the mobility of the UAVs, jammers, and targets makes inter-UAV interference and jamming time-varying. Therefore, it is necessary to propose a reasonable dynamic adjustment approach to target association, trajectory, and power to adapt to the change of inter-UAV interference and jamming.

To tackle above challenges, we propose a dynamic collaboration approach of target association, trajectory planning, and power control. The contributions of this work are summarized below.
\begin{itemize}
  \item We study a novel scenario where a UAV swarm tracks multiple targets under dynamic malicious jamming. An interference minimization problem is formulated to minimize the total interference, including inter-UAV interference and jamming, under the constraints of UAVs' flying abilities, battery capacity, and communication quality.

  \item We propose a dynamic collaboration algorithm to solve the interference minimization problem. The optimization problem is characterized by multiple coupling and difficult to solve directly. By exploiting the characteristics of the problem, we decompose it into three subproblems: target association, trajectory planning, and power control. First, we propose a cluster-evolutionary target association (CETA) algorithm to divide UAV swarm into multiple sub-swarms and match them to targets. Second, a jamming-sensitive and singular case tolerance (JSSCT)-artificial potential field (APF) algorithm is constructed to plan UAVs' trajectories and avoid the collision with obstacles, other UAVs, and jammers (in the environment). Third, the subproblem of power control is reformulated as a jamming-aware mean field game (JA-MFG), which considers the inter-UAV interference and jamming simultaneously. Finally, to minimize the total interference, a dynamic collaboration approach is designed to adaptively update the sub-swarms division, UAV trajectory, and transmission power and adapt to the changes in the UAVs', jammers', and targets' locations.

  \item The performance of the proposed dynamic collaboration approach is validated by simulations. Numerical results demonstrate that the proposed CETA algorithm can effectively mitigate the interference compared with the other popular clustering algorithms. Moreover, the proposed dynamic collaboration approach considering jammer reduces the interference significantly and improves the tracking quality greatly compared with the state-of-art baselines.
\end{itemize}

The rest of this paper is outlined as follows. The system model of considered MTT scenario is introduced, and then the interference minimization problem is formulated in Sec.~\ref{section2}. In Sec.~\ref{section3}, we propose a dynamic collaboration approach for target association, trajectory planning, and power control. Section.~\ref{section4} shows the numerical results. Finally, we conclude the paper in Sec.~\ref{section5}. Table~\ref{parameters and definition} lists notation in this paper.
\begin{table*}
\caption{Notation}
\label{parameters and definition}
\centering
\scalebox{0.9}{
\begin{tabular}{|c|c|c|c|}
\hline
Parameters & Description & Parameters & Description\\
\hline
$N, \mathcal{N}$ & Number of UAVs and set of UAVs & $\Lambda$ & Number of slots\\
\hline
$K$ & Number of obstacles & $v_{i}$ & Speed of the $i$-th UAV\\
\hline
$M$ & Number of targets and malicious jammers & $\alpha$ & Path loss\\
\hline
$\Phi_{m}$ &  Number of UAVs in the $m$-th sub-swarm & $\gamma_{\rm th}$ & Minimum SINR requirement\\
\hline
$p_i^{\rm U}, p_m^{\rm J}$ & Transmit power of the $i$-th UAV and jamming power of the $m$-th jammer & $d_c$ & Cutoff distance\\
\hline
$g_{n,i^{'}}^{\rm U}, g_{m,i^{'}}^{\rm J}$ &{\makecell[c]{Interference channel gain between the $i^{'}$-th UAV and the $n$-th UAV and \\jamming channel gain between the $i^{'}$-th UAV and the $m$-th jammer}} & $r_{m}$ & Centroid of the $m$-th sub-swarm\\
\hline
$X_{i}, X_{k}, X_{m}, X_{\rm tar}^{m}$ & 3-D position of the $i$-th UAV, the $k$-th obstacle, the $m$-th jammer, and the $m$-th target, respectively & $d_0$ & Influence distance\\
\hline
$e(0), e_{i}$ & Initial energy and remaining energy of the $i$-th UAV & $\widetilde{L}$ & Step length\\
\hline
$\delta_{i}$ & Minimum distance between the $i$-th UAV and other UAVs with higher density & $\rho_{i}$ & Local density of the $i$-th UAV\\
\hline
$k_{\rm rep}$, $k_{\rm obs}$, $k_{\rm jam}$ & {\makecell[c]{Repulsive gain coefficient between UAVs, \\between UAVs and obstacles, and between UAVs and jammers, respectively}} & $k_{\rm att}$ & Attractive gain coefficient\\
\hline
$I_{\rm th}^{\rm U}$ , $I_{\rm th}^{\rm J}$ & Interference threshold and jamming threshold & $k_{\rm ext}$ & External force gain coefficient\\
\hline
\end{tabular}
}
\end{table*}
\section{System Model and Problem Formulation}\label{section2}
In this section, we first establish the model of MTT using a swarm of UAVs under malicious jammers in Sec.~\ref{section2A}. Next, in Sec.~\ref{section2B}, we formulate the joint target association, path, and power optimization problem with the target of minimizing interference.
\subsection{System Model}\label{section2A}
We consider a scenario in which a UAV swarm tracks $M$ targets under $M$ malicious jammers while avoiding collision with $K$ obstacles, as shown in Fig. \ref{figure1a}. The swarm contains $N$ UAVs, and these $N$ UAVs are divided into $M$ sub-swarms. It takes $\Lambda T$ seconds for the swarm to track $M$ targets, which moves with the same uniform speed in a straight line towards different directions. Each target is accompanied by a malicious jammer, which moves in the same way as its following target and keeps a certain distance from it. Each UAV communicates with the nearest UAV to convey the collected information in the same channel. However, it does not forward the information obtained from other UAVs in the process of tracking. The malicious jammers generate strong jamming to degrade the communication quality between UAVs, thus preventing UAVs from tracking targets. When UAV$_{i}$ communicates with the closest UAV$_{i^{'}}$, UAV$_{i^{'}}$ receives interference brought by other UAVs and jamming sent by malicious jammers. Therefore, the SINR of UAV$_{i^{'}}$ is
\begin{equation}\label{SINR of general player}
{\rm{SINR}}_{i,i^{'}}(t)=\frac{p_i^{\rm U}(t)g_{i,i^{'}}^{\rm U}(t)}{I_{i^{'}}^{\rm U}(t)+I_{i^{'}}^{\rm J}(t)+\sigma(t)^2},
\end{equation}
where $p_i^{\rm U}(t)$ is the transmit power of UAV$_{i}$. $g_{i,i^{'}}^{\rm U}(t)=h_{i,i^{'}}^{\rm U}(t)d(X_{i}(t),X_{i^{'}}(t))^{-\alpha}$ denotes the channel gain from UAV$_{i}$ to UAV$_{i^{'}}$, where $h_{i,i^{'}}^{\rm U}(t)$ follows the Nakagami-m distribution\cite{01}, $d(\cdot)$ is the distance function, $X(t)=(x(t),y(t),z(t))$ is the position coordinates in $3$-D space, and $\alpha$ denotes the path loss exponent. $\sigma(t)^2$ denotes the background noise power. $\textstyle{I_{i^{'}}^{\rm U}(t)=\sum\limits_{n\neq{i,i^{'}}}^{N}p_n^{\rm U}(t)g_{n,i^{'}}^{\rm U}(t)}$ and $\textstyle{I_{i^{'}}^{\rm J}(t)=\sum\limits_{m=1}^{M}p_m^{\rm J}(t)g_{m,i^{'}}^{\rm J}(t)}$ denote the inter-UAV interference from other UAVs to UAV$_{i^{'}}$ and the jamming received by UAV$_{i^{'}}$ from jammers, respectively,
where $p_m^{\rm J}(t)$ is the jamming power of the $m$-\rm{th} jammer, and $g_{n,i^{'}}^{\rm U}(t)$ and $g_{m,i^{'}}^{\rm J}(t)$ denote the interference channel gain and the jamming channel gain, respectively.
\subsection{Problem Formulation}\label{section2B}
 An optimization problem is formulated for minimizing the total interference of each UAV subject to UAV's flying abilities, battery capacity, and communication quality, which is given as
\begin{align}
\min_{X_{i}(t),\Delta\beta_{i}(t),\Phi_{m}(t),p_{i}^{\rm U}(t)}&I_{i^{'}}^{\rm U}(t)+I_{i^{'}}^{\rm J}(t)\tag{P1}\label{P1}\\
{\rm{s.t.}} &\sum\limits_{m=1}^{M}\Phi_{m}(t)=N,\tag{P1a}\label{P1a}\\
     &\bigcap\limits_{i=1}^N X_{i}(t)=\emptyset,\tag{P1b}\label{P1b}\\
     &X_{i}(t)\cap\Omega_{k}=\emptyset,\tag{P1c}\label{P1c}\\
     &X_{i}(t)\cap X_{m}(t)=\emptyset,\tag{P1d}\label{P1d}\\
     &d(X_{i}(\Lambda T),X_{m}^{\rm tar}(\Lambda T))\le d_{\rm max},\tag{P1e}\label{P1e}\\
     &0^{\circ}\leq\Delta\beta_{i}(t)<90^{\circ},\tag{P1f}\label{P1f}\\
     &V_i(t)\leq V_{\rm max},\tag{P1g}\label{P1g}\\
     &e_{i}(t)\in[0,e_{i}(0)],\tag{P1h}\label{P1h}\\
     &p_i^{\rm U}(t)\in[0,p_{\rm max}^{\rm U}],\tag{P1i}\label{P1i}\\
     &p_m^{\rm J}(t)\in[0,p_{\rm max}^{\rm J}],\tag{P1j}\label{P1j}\\
     &{\rm d}e_{i}(t)\!=\!-p_{i}^{\rm U}(t){\rm d}t+\nu_{t}{\rm d}W_{i}(t),\tag{P1k}\label{P1k}\\
     &{\rm SINR}_{i,i^{'}}(t)\ge\gamma_{\rm th},\tag{P1l}\label{P1l}
\end{align}
where ${\forall t}\in[0,\Lambda T]$, ${\forall i}\in[1,N]$, ${\forall k}\in[1,K]$, and ${\forall m}\in[1,M]$.
\begin{figure}
\centering
\includegraphics[width=3.5in]{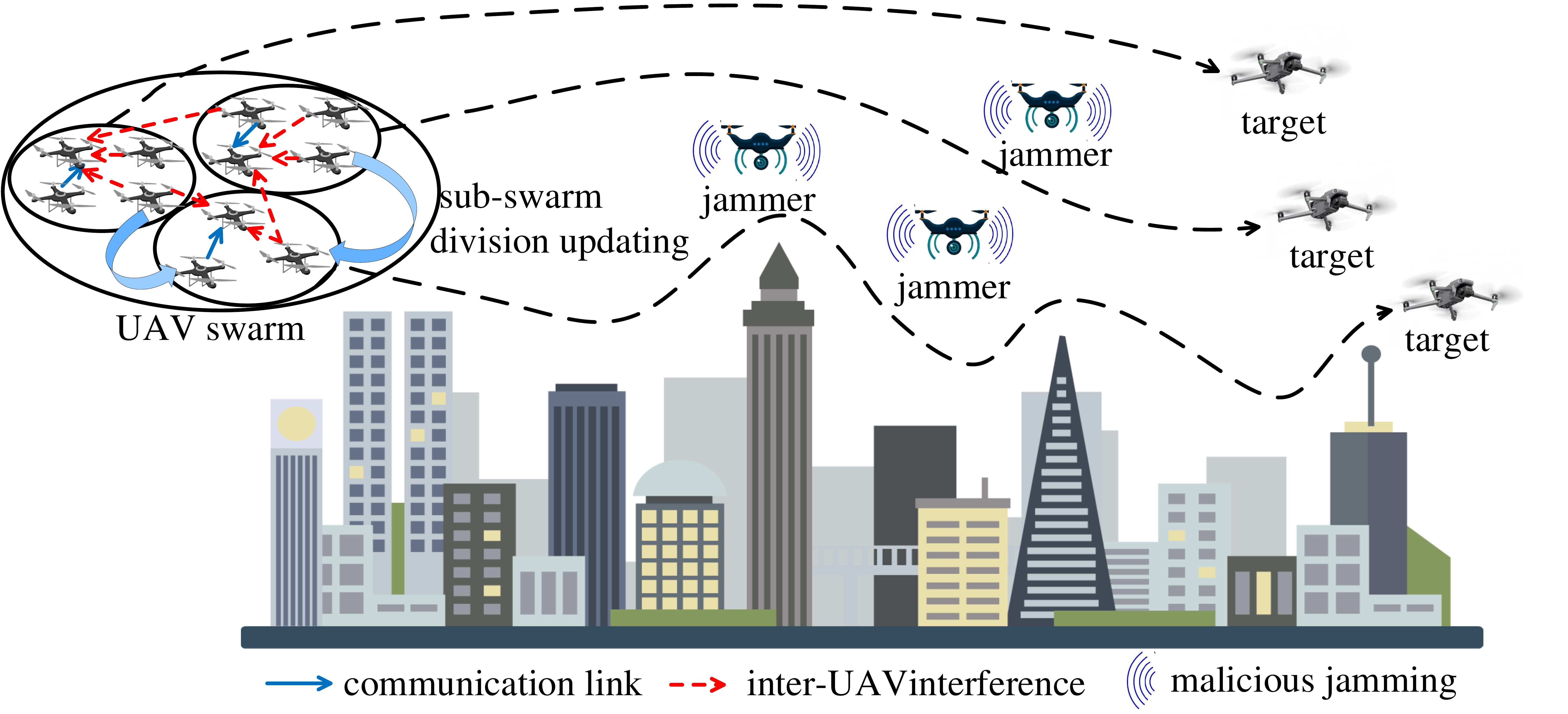}
\caption{A UAV swarm avoids the obstacles (buildings with different shapes on the ground) and tracks multiple targets under malicious jammers.}
\label{figure1a}
\end{figure}
Constraint (\ref{P1a}) ensures the sum of the UAVs in each sub-swarm is equal to $N$, where $\Phi_{m}(t)$ is the number of UAVs in $m$-\rm{th} sub-swarm. Constraint (\ref{P1b}) prevents physical collision between UAVs\footnote{Note that the different UAVs can be in the same position at the different moments.}. Constraints (\ref{P1c}) and (\ref{P1d}) ensure the collision does not occur between UAVs and obstacles and between UAVs and jammers, where $\Omega_{k}$ is the coordinate set of all points on $k$-{\rm th} obstacle surface. Constraint (\ref{P1e}) ensures that UAVs can reach the target. Constraint (\ref{P1f}) assures that a UAV does not jitter, where $\Delta\beta_{i}(t)$ is the turning angle. Constraint (\ref{P1g}) is the flight speed constraint of UAV, where $V_i(t)$ is the flight speed, $V_i(t)=\sqrt{(\dot x_{i}(t))^2+(\dot y_{i}(t))^2+(\dot z_{i}(t))^2}$ with $\dot x_{i}(t)$, $\dot y_{i}(t)$, and  $\dot z_{i}(t)$ denoting the time-derivatives of $x_{i}(t)$, $y_{i}(t)$, and $z_{i}(t)$, respectively, and $V_{\rm max}$ in meter/second (m/s) is maximum speed. Constraint (\ref{P1h}) indicates that the remaining energy does not exceed the initial energy, where $e_{i}(t)$ is the remaining energy, and $e_{i}(0)$ represents the initial energy\footnote{Note that our minimization objective interference is mainly affected by the communication power and not by the power consumed in flight in this paper. Therefore, we only consider the communication energy, and the energy for UAV flight is not considered.}. Constraints (\ref{P1i}) and (\ref{P1j}) are the transmit power constraint of UAVs and jammers, respectively, where $p_{\rm max}^{\rm U}$ and $p_{\rm max}^{\rm J}$ are the maximum transmit power of UAVs and jammers, respectively. Constraint (\ref{P1k}) shows the relationship between the remaining energy and the transmit power of UAVs. As in \cite{39,40}, the evolutionary rule of the remaining energy is modeled as an Ito process in constraint (\ref{P1k}), where $\nu_{t}$ is a constant, and $W_{i}(t)$ is an independent Wiener process. Constraint (\ref{P1l}) means that each UAV should meet the SINR requirement, where $\gamma_{\rm th}$ is the SINR threshold.

\section{Dynamic Collaboration of Target Association, Trajectory Planning, and Power Control}\label{section3}
Problem (\ref{P1}) is difficult to solve directly owing to the multiple coupling. Specifically, the UAV's associated target and trajectory are coupled tightly. Meanwhile, the transmit power relies on the UAV trajectory. To tackle the problem (\ref{P1}) effectively, it is decomposed into three subproblems: 1) target association subject to (\ref{P1a}), 2) trajectory planning subject to (\ref{P1b})-(\ref{P1g}), and 3) power control subject to (\ref{P1h})-(\ref{P1l}).

To better understand the dynamic collaboration workflow of these three subproblems, we illustrate their execution order in Fig. \ref{figure1b}. As we can see, it takes $\Lambda T$ seconds for the UAV swarm to track $M$ targets. To adaptively control UAVs, the tracking duration $\Lambda T$ is divided into $\Lambda$ slots, and each slot contains $Y$ subslots with time intervals $\Delta t$. Before UAVs perform multiple target tracking, the target association is carried out to partition swarm according to UAVs' starting positions and match each sub-swarm to a target. Then, the trajectory is planned for UAVs to track the targets. At the end of each slot, the power is allocated according to the distance among UAVs, and thus the inter-UAV interference and jamming can be updated. Note that, in our system, the target association, trajectory planning, and power control are dynamically collaborated before reaching the target. In specific, as shown in Fig. \ref{figure1b}, target association is executed only when the current tracking target is not the closest one to the centroid of the corresponding sub-swarm. Besides, the trajectory is replanned only when the inter-UAV interference and jamming exceed a certain threshold.

In the following, we elaborate the solution of problem (\ref{P1}). First, a cluster-evolutionary target association (CETA) algorithm is designed in Sec.~\ref{section3A}. Second, we propose a jamming-sensitive and singular case tolerance (JSSCT)-artificial potential field (APF) algorithm in Sec.~\ref{section3B}. Then, a jamming-aware mean field game (JA-MFG) power control scheme is developed in Sec.~\ref{section3C}. Finally, to minimize the total interference, a dynamic collaboration approach is proposed in Sec.~\ref{section3D} to update the sub-swarms division, UAV trajectory, and transmit power.
\begin{figure}
  \centering
  \includegraphics[width=3.5in]{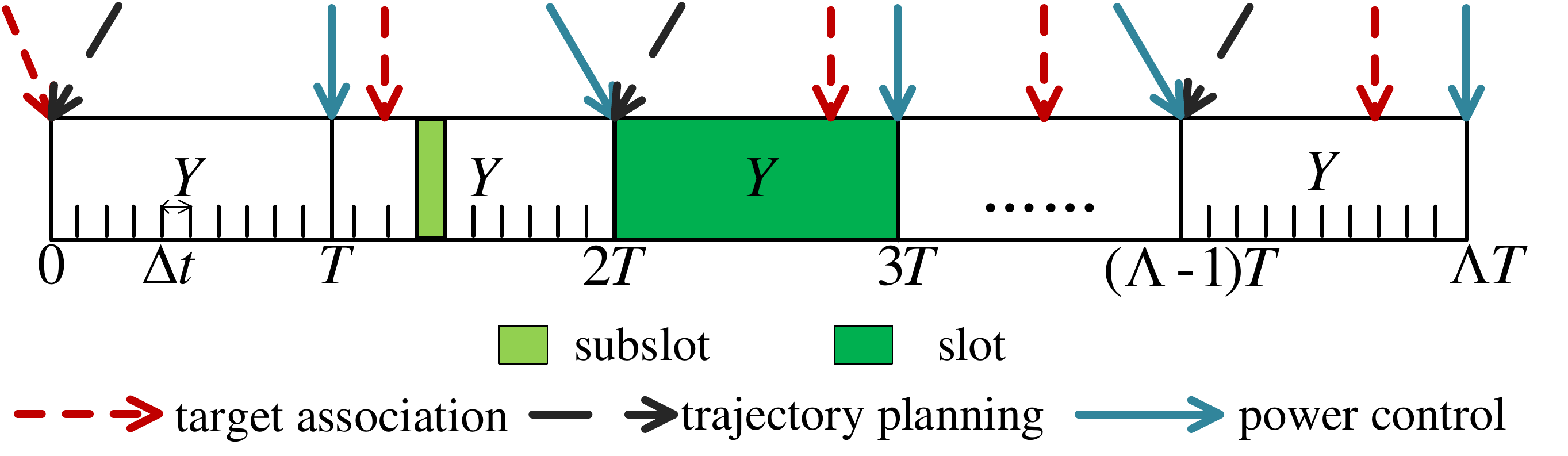}
  \caption{The execution order of target association, trajectory planning, and power control.}
  \label{figure1b}
\end{figure}
\subsection{Target Association for UAV Swarm}\label{section3A}
To perform MTT, UAVs are associated with the tracking target first. To enhance the efficiency of target association, we divide the UAV swarm into multiple sub-swarms, and then match these sub-swarms to targets. The clustering by fast search and find of density peaks (CFSFDP) algorithm has been widely used in clustering the point with arbitrary shapes\cite{12,13,14}. The superiority of this algorithm is that it requires only one-time iteration. In CFSFDP, the cluster number cannot be given directly and is found via a decision graph. However, in our scenario, the number of sub-swarms needs to be the same as the number of targets. The number of sub-swarm is obtained by using the decision graph in CFSFDP, and may not meet the aforementioned requirement. To adapt to our MTT system, inspired by the merit of CFSFDP, we propose a CETA algorithm.

The basic idea of CETA algorithm is that sub-swarm centers are encircled by UAVs with lower local densities and keep a relatively large distance from UAVs with higher local densities. For each UAV$_{i}$, the local density $\rho_{i}$ is given by
\begin{equation}\label{local density}
\rho_{i}(t)=\sum\limits_{j\neq{i}}^{N}\chi(d(X_{i}(t),X_{j}(t))-d_c),
\end{equation}
where
$\chi(d(X_{i}(t),X_{j}(t))-d_c)\!=\!\left\{
\begin{aligned}
&1,&d(X_{i}(t),X_{j}(t))<d_c,\\
&0,&d(X_{i}(t),X_{j}(t))\ge d_c,
\end{aligned}\right.$
where $d_c$ is a cutoff distance and takes the first $2\%$ of all distances empirically\cite{38}. Therefore, $\rho_{i}$ reflects the number of UAVs whose distances to UAV$_{i}$ are less than $d_c$. The distance $\delta_{i}$ from UAVs of a higher density refers to the minimum distance between UAV$_{i}$ and other UAVs with a higher density. Especially, if the UAV$_{i}$ has the highest density, its $\delta_{i}$ is equal to the maximum distances between UAV$_{i}$ and other UAVs. $\delta_{i}$ is calculated by
\begin{equation}\label{minimum distance}
\delta_{i}(t)=\left\{
\begin{aligned}
&\mathop{\rm min}_{j:\rho_{j}(t)>\rho_{i}(t)}{(d(X_{i}(t),X_{j}(t)))},&\exists{\rho_{j}(t)>\rho_{i}(t)},\\
&\mathop{\rm max}_{j\neq i,j=1,2,..n}{(d(X_{i}(t),X_{j}(t)))},&\nexists{\rho_{j}(t)>\rho_{i}(t)}.
\end{aligned}\right.
\end{equation}

The sub-swarm center is characterized by $\rho_{i}$ and $\delta_{i}$ simultaneously. To determine $M$ sub-swarm centers quickly, we define
\begin{equation}\label{cluster centers method}
\eta_{i}(t)=\rho_{i}(t)\cdot\delta_{i}(t).
\end{equation}
After calculating $\eta_{i}$ of each UAV, UAVs are sorted according to their corresponding $\eta_{i}$ in descending order, and the first $M$ UAVs are appointed as the sub-swarm center. The remaining UAVs are allotted to the sub-swarm where the nearest sub-swarm center belongs. In this way, the whole process of sub-swarm division is completed.

To achieve the tracking task, UAVs should be matched to $M$ targets. To this end, the centroid $r_{m}$ of each sub-swarm is
\begin{equation}\label{centroid of each cluster}
r_{m}(t)=\frac{\sum\limits_{i\in C_{m}(t)}X_{i}(t)}{\Phi_{m}(t)},
\end{equation}
where $C_{m}$ denotes the $m$-\rm{th} sub-swarm of UAVs. Then, we propose the principle of target association that each target is tracked by a sub-swarm of UAVs whose centroid is closest to the target. So far, the subproblem of target association is solved, and Algorithm 1 briefly illustrates the CETA algorithm.
\begin{algorithm}
\caption{CETA Algorithm}
\label{alg1}
\begin{algorithmic}[1]
\REQUIRE $X_{i}$, $N$, $M$, $d_c$, where ${\forall i}\in[1,N]$.
\FOR{$i=1$ to $N$}
\STATE Compute the local density $\rho_{i}$, the distance $\delta_{i}$, and the value of $\eta_{i}$ according to (\ref{local density}), (\ref{minimum distance}), and (\ref{cluster centers method}), respectively.
\ENDFOR
\STATE Sort UAVs according to their corresponding $\eta_{i}$ in descending order.
\STATE Select the first $M$ UAVs as the sub-swarm center.
\STATE Assign remaining UAVs to the sub-swarm where the nearest sub-swarm center belongs.
\STATE Compute the centroid of each sub-swarm according to (\ref{centroid of each cluster}).
\STATE Assign each sub-swarm of UAVs to track the target closest to the centroid.
\end{algorithmic}
\end{algorithm}
\vspace{-1.43em}
\subsection{Trajectory Planning for UAV Swarm under Malicious Jammer}\label{section3B}
The tracking target of UAVs can be obtained according to the designed target association algorithm. In this subsection, trajectory planning for the UAV swarm is performed to make UAVs avoid collision with obstacles, malicious jammers, and other UAVs in the swarm and reach the target. In specific, we propose the JSSCT-APF to plan trajectory for the UAV swarm. In JSSCT-APF, each UAV is steered towards its target by the resultant potential field of attractive and repulsive potential field. In our MTT scenario, the attractive potential field, denoted by $\varphi_{\rm att}$, is created by the target associated with each UAV to make UAVs close to their tracking targets, which can be given by
\begin{equation}\label{attractive potential field}
\varphi_{\rm att}(X_{i}(t))=\frac{1}{2}k_{\rm att}A(X_{i}(t)).
\end{equation}
where $k_{\rm att}$ is the gain factor of the attractive potential field, $A(X_{i}(t))=d(X_{i}(t),X_{m}^{\rm tar}(t))^{2}$, $X_{m}^{\rm tar}(t)$ denotes the position coordinate of the target.

The repulsive potential fields are created by other $N-1$ UAVs in the swarm, $K$ obstacles, and $M$ malicious jammers to drive UAVs away from them, which are given by
\begin{equation}\small\label{rep repulsive potential field}
\varphi_{\rm rep}(X_{i}(t))=\left\{
\begin{aligned}
&\sum\limits_{j=1,j\neq{i}}^{N}\frac{1}{2}k_{\rm rep}(\lambda)B(X_{i}(t),X_{j}(t))A(X_{i}(t))^{\frac{q}{2}},\\
&\quad\qquad\qquad\qquad\quad{d(X_{i}(t),X_{j}(t))\leq d_{0}},\\
&0,\qquad\qquad\qquad\quad{d(X_{i}(t),X_{j}(t))>d_{0}},\\
\end{aligned}\right.
\end{equation}
where $\varphi_{\rm rep}$ is repulsive potential field created by other UAVs in the swarm. $k_{\rm rep}(\lambda)$ is the gain factor of corresponding potential field, where $\lambda={1,2,...,\Lambda}$ is the time slot index. $q$ is a positive constant, and generally set $q=1$\cite{49}. $B(X_{i}(t),X_{j}(t))=\left(\frac{1}{d(X_{i}(t),X_{j}(t))}-\frac{1}{d_{0}}\right)^{2}$, where $d_{0}$ represents the maximum influence range of obstacles. The repulsive potential fields created by obstacles and malicious jammers, denoted by $\varphi_{\rm obs}$ and $\varphi_{\rm jam}$, are obtained by the same way as $\varphi_{\rm rep}$.

Calculating the negative gradient of the potential field functions, we obtain
\begin{equation}\label{attractive force}
\vec F_{\rm att}(X_{i}(t))\!\!=\!\!-\nabla_{X_{i}}(\varphi_{\rm att}(X_{i}(t)))\!=\!k_{\rm att}{A(X_{i}(t))^{\frac{1}{2}}}\vec n_{im},
\end{equation}
\begin{equation}\small
\begin{aligned}\label{rep repulsive force}
&\vec F_{\rm rep}(X_{i}(t))=-\nabla_{X_{i}}(\varphi_{\rm rep}(X_{i}(t)))\\
&=\left\{
\begin{array}{ll}
\vec F_{\rm rep1}(X_{i}(t))+\vec F_{\rm rep2}(X_{i}(t)),&{d(X_{i}(t),X_{j}(t))\leq d_{0}},\\
0,&{d(X_{i}(t),X_{j}(t))>d_{0}},\\
\end{array} \right.
\end{aligned}
\end{equation}
where
\begin{equation}\small\label{rep1 repulsive force}
\vec F_{\rm rep1}(X_{i}(t))\!\!=\!\!\sum\limits_{j\neq{i}}^{N}\!k_{\rm rep}(\lambda)B(X_{i}(t),\!X_{j}(t))^\frac{1}{2}\frac{A(X_{i}(t))^{\frac{q}{2}}}{d(X_{i}(t),\!\!X_{j}(t))^{2}}\vec n_{ji},
\end{equation}
\begin{equation}\small\label{rep2 repulsive force}
\vec F_{\rm rep2}(X_{i}(t))\!=\!\sum\limits_{j\neq{i}}^{N}\frac{q}{2}k_{\rm rep}(\lambda)B(X_{i}(t),X_{j}(t))A(X_{i}(t))^{\frac{q}{2}-1}\vec n_{im}.
\end{equation}
where $\vec F_{\rm att}(X_{i}(t))$ and $\vec F_{\rm rep}(X_{i}(t))$ are the attractive force and the repulsive forces corresponding to $\varphi_{\rm att}$ and $\varphi_{\rm rep}$, respectively. In (\ref{attractive force}), (\ref{rep1 repulsive force}), and (\ref{rep2 repulsive force}), $\vec n_{im}$ is the unit vector pointing to the target $m$ from the UAV$_{i}$ and $\vec n_{ji}$ is the unit vector pointing to the UAV$_{i}$ from the UAV$_{j}$. Similarly, the repulsive force functions $\vec F_{\rm obs}(X_{i}(t))$ and $\vec F_{\rm jam}(X_{i}(t))$ produced by obstacles and jammers are the negative gradient of their corresponding repulsive potential field function $\varphi_{\rm obs}$ and $\varphi_{\rm jam}$, respectively, which are no longer listed here.

Thus, the resultant force is equal to
\begin{equation}\label{total force}
\begin{aligned}
\vec F_{\rm tot}(X_{i}(t))&=\vec F_{\rm att}(X_{i}(t))+\vec F_{\rm rep}(X_{i}(t))\\
&+\vec F_{\rm obs}(X_{i}(t))+\vec F_{\rm jam}(X_{i}(t)).
\end{aligned}
\end{equation}
Then, the turning angle is derived as
\begin{equation}\label{turning angle}
\Delta\beta_{i}(t)=\arccos\frac{\vec F_{\rm tot}(X_{i}(t-1))\cdot\vec F_{\rm tot}(X_{i}(t))}{\mid\vec F_{\rm tot}(X_{i}(t-1))\mid\mid\vec F_{\rm tot}(X_{i}(t))\mid}.
\end{equation}

Finally, UAVs move in the direction of resultant force $\vec F_{\rm tot}(X_{i}(t))$, and its movement determined by the following equation,
\begin{equation}\label{coordinates of next step}
\left\{
\begin{array}{l}
x_{i}(t+1)=x_{i}(t)+\widetilde{L}_{i}(t)\ast\cos\theta_{ix}(t), \\
y_{i}(t+1)=y_{i}(t)+\widetilde{L}_{i}(t)\ast\cos\theta_{iy}(t), \\
z_{i}(t+1)=z_{i}(t)+\widetilde{L}_{i}(t)\ast\cos\theta_{iz}(t),
\end{array}
\right.
\end{equation}
where $\widetilde{L}_{i}(t)=V_i(t)\Delta(t)$ is step length of UAV$_{i}$ from $t$ to $t+1$. In APF, the step length of each step is generally fixed, and its settings need to meet constraint (\ref{P1g}). $\theta_{ix}(t)$, $\theta_{iy}(t)$ and $\theta_{iz}(t)$ denote the angles between resultant force $\vec F_{\rm tot}(X_{i}(t))$ and the three coordinate axes.

However, the UAV may stop at a balance point where the resultant force is zero during the tracking. To rescue the UAV from this dilemma and make it continue to track the target, we introduce an external force considering the impact of jammers, which is given as
\begin{equation}\small\label{external force}
\vec F_{\rm ext}(X_{i}(t))\!\!=\!\!\frac{1}{2}k_{\rm ext}F_{\rm ext1}(X_{i}(t))F_{\rm ext2}(X_{i}(t))F_{\rm ext3}(X_{i}(t))\!\vec n_{im},
\end{equation}
where $F_{\rm ext1}(X_{i}(t))\!=\!\sum\limits_{j\neq{i}}^{N}B(X_{i}(t),X_{j}(t))$, $F_{\rm ext2}(X_{i}(t))\!\!=\!\!\sum\limits_{k=1}^{K}\!B(X_{i}(t),X_{k})$, $F_{\rm ext3}(X_{i}(t))\!\!=\!\!\sum\limits_{m=1}^{M}\!B(X_{i}(t),X_{m}(t))$, $k_{\rm ext}$ is gain factor.

Moreover, the UAV may jitter when the turning angle $\Delta\beta_{i}(t)$ satisfies $90^{\circ}\leq\Delta\beta_{i}(t)\leq180^{\circ}$. To avoid this phenomenon, the turning angle $\Delta\beta_{i}(t)$ is reduced by half. In this case, the principle of trajectory update is
\begin{equation}\small\label{update coordinates of next step}
\left\{
\begin{array}{l}
x_{i}(t+1)=x_{i}(t)+\widetilde{L}_{i}(t)\ast\cos(\theta_{ix}(t-1)-\frac{1}{2}\Delta\beta_{i}(t)), \\
y_{i}(t+1)=y_{i}(t)+\widetilde{L}_{i}(t)\ast\cos(\theta_{iy}(t-1)-\frac{1}{2}\Delta\beta_{i}(t)), \\
z_{i}(t+1)=z_{i}(t)+\widetilde{L}_{i}(t)\ast\cos(\theta_{iz}(t-1)+\frac{1}{2}\Delta\beta_{i}(t)).
\end{array}
\right.
\end{equation}
To this end, the subproblem of trajectory planning for UAV swarm under malicious jammer is solved, and the JSSCT-APF is shown in Algorithm 2.
\begin{algorithm}
\caption{JSSCT-APF Algorithm}
\label{alg2}
\begin{algorithmic}[2]
\REQUIRE ~~\\
Initialize $N$, $K$, $M$, $X_{i}(0)$, $X_{k}$, $X_{m}(0)$, $X_{m}^{\rm tar}(0)$, $k_{\rm att}$, $k_{\rm rep}(0)$, $k_{\rm obs}$, $k_{\rm jam}(0)$, $k_{\rm ext}$, $\widetilde{L}$, $\Lambda T$, and $d_{0}$, where ${\forall i}\in[1,N]$, ${\forall k}\in[1,K]$, and ${\forall m}\in[1,M]$.
\ENSURE $X_{i}(t)$, where ${\forall t}\in[0,\Lambda T]$, ${\forall i}\in[1,N]$.
\FOR{$t=1$ to $\Lambda T$}
\FOR{$i=1$ to $N$}
\IF{$d(X_{i}(t),X_{m}^{\rm tar}(t))>d_{\rm max}$}
\STATE Calculate the attractive force, the repulsive forces, the resultant force, and the turning angle according to (\ref{attractive force}), (\ref{rep repulsive force}), (\ref{total force}), and (\ref{turning angle}), respectively.
\IF{$F_{\rm tot}=0$}
\STATE Introduce an external force according to (\ref{external force}).
\STATE Set $F_{\rm tot}=F_{\rm ext}$ and update the trajectory according to (\ref{coordinates of next step}).
\ENDIF
\IF{$90^{\circ}\leq\Delta\beta_{i}(t)\leq180^{\circ}$}
\STATE Update the trajectory according to (\ref{update coordinates of next step}).
\ENDIF
\STATE Update the trajectory according to (\ref{coordinates of next step}).
\ENDIF
\ENDFOR
\ENDFOR
\end{algorithmic}
\end{algorithm}
\subsection{Power Control for UAV Swarm under Malicious Jamming}\label{section3C}
After the planning trajectory, the distance among UAVs can be obtained. In this subsection, we solve the subproblem of power control based on these distances to mitigate the interference.

In the subproblem of power control, UAVs are interacting with each other, which makes it difficult to solve this problem. Game theory is an effective tool to model the interactions among distributed decision makers and promote the performance of decentralized networks\cite{41}. On the one hand, the goal of power control is mitigating the total interference, containing inter-UAV interference and jamming. On the other hand, the communication quality of each UAV should meet constraint (\ref{P1l}). Substituting (\ref{SINR of general player}) into (\ref{P1l}), we can obtain
\begin{equation}\label{quality of communication}
p_i^{\rm U}(t)g_{i,i^{'}}^{\rm U}(t)-\gamma_{\rm th}(I_{i^{'}}^{\rm U}(t)+I_{i^{'}}^{\rm J}(t)+\sigma^2(t))\ge0.
\end{equation}
According to (\ref{SINR of general player}), UAVs need to increase the transmit power to improve the SINR, which makes other UAVs suffer greater interference. To balance the total interference and SINR, similar to \cite{42,44}, the constraint (\ref{P1l}) is relaxed to ${\rm SINR}_{i,i^{'}}(t)$ as close to $\gamma_{\rm th}$ as possible. Let $c_{1,i}(t)=\omega_1[p_i^{\rm U}(t)g_{i,i^{'}}^{\rm U}(t)-\gamma_{\rm th}(I_{i^{'}}^{\rm U}(t)+I_{i^{'}}^{\rm J}(t)+\sigma^2(t))]^{2}$ be the SINR cost, and $c_{2,i}(t)=\omega_2(I_{i^{'}}^{\rm U}(t)+I_{i^{'}}^{\rm J}(t))^{2}$ be the interference cost, and the cost function considering the above two costs can be expressed as
\begin{equation}\label{cost function}
c_i(t)=\omega_1c_{1,i}(t)+\omega_2c_{2,i}(t),
\end{equation}
where $\omega_1$ and $\omega_2$ are the weighting coefficients that make the above two terms fit into one scale.
The power control subproblem is equivalent to finding an optimal control policy for minimizing the cost function  over each slot subject to (\ref{P1h})-(\ref{P1k})\cite{43}. The optimal control policy is expressed as
\begin{equation}\label{optimal control policy}
Q_{i}^{\star}(t)=\mathop{\arg\min}_{p_{i}^{\rm U}(t)}{E\left[\int_{(\lambda-1)T}^{\lambda T}c_{i}(t)dt+c_{i}(\lambda T)\right]},
\end{equation}
where $c_{i}(\lambda T)$ is the terminal cost on each slot.

\subsubsection{Stochastic Differential Game Formulation}\label{section3Ca}
To obtain the above optimal control policy, a stochastic differential game ($N$-body game) is formulated to handle the power control subproblem, which is defined as follows a 5-tuple: $G=\{\mathcal{N},\{p_{i}^{\rm U}\}_{i\in\mathcal{N}},\{e_{i}\}_{i\in\mathcal{N}},\{Q_{i}\}_{i\in\mathcal{N}},\{c_{i}\}_{i\in\mathcal{N}}\}$, where $\mathcal{N}$ is the set of agents, and each UAV can be selected as an agent in the MTT system. $\{p_{i}^{\rm U}\}_{i\in\mathcal{N}}$ is actions set, which contains all possible transmit power. $\{e_{i}\}_{i\in\mathcal{N}}$ is energy state set. Constraint (\ref{P1j}) shows that there is less energy available as the agent consumes more energy. $\{Q_{i}\}_{i\in\mathcal{N}}$ is control policy and defined in (\ref{optimal control policy}). $\{c_{i}\}_{i\in\mathcal{N}}$ is cost function, and defined in (\ref{cost function}), which considers both total interference and SINR performance.

For the formulated game $G$, the value function is defined as
\begin{equation}\label{value function}
u_{i}(t,e_{i}(t))=\mathop{\min}_{p_{i}^{\rm U}(t)}{E\left[\int_{t}^{\lambda T}c_{i}(t)dt+u_{i}(T,e_{i}(\lambda T))\right]},
\end{equation}
where $t\in[(\lambda-1)T,\lambda T]$, $u_{i}(T,e_{i}(\lambda T))$ is final value of $u_{i}(t,e_{i}(t))$. According to the optimal control theory, (\ref{value function}) should meet the following Hamilton-Jacobi-Bellman (HJB) partial differential equation (PDE)\cite{23},
\begin{equation}\label{HJB}
\frac{\partial{u_{i}(t,e_{i}(t))}}{\partial{t}}+H\left(e_{i}(t),\frac{\partial{u_{i}(t,e)}}{\partial{e}}\right)
+\frac{\nu_{t}^{2}}{2}\frac{\partial^{2}{u_{i}(t,e_{i}(t))}}{\partial^{2}{e}}=0,
\end{equation}
where
\begin{equation}\label{Hamiltionian}
H\!\left(e_{i}(t),\frac{\partial{u_{i}(t,e)}}{\partial{e}}\right)
\!=\!\mathop{\min}_{p_{i}^{\rm U}(t)}\!\!\left(c_{i}(p_{i}^{\rm U}(t))-p_{i}^{\rm U}(t)\frac{\partial{u_{i}(t,e_{i}(t))}}{\partial{e}}\right)
\end{equation}
is called the Hamiltionian.

In our formulated power control stochastic differential game, $N$ coupled HJB equations needed to be solved to obtain equilibrium. However, when $N$ is sufficiently large, it is impractical to get the optimal control policy of each UAV. To deal with this problem, we introduce MFG, which can be regarded as an extension of the stochastic differential game, to our model and propose a JA-MFG power control scheme.
\subsubsection{Mean Field Approximation}\label{section3Cb}
According to (\ref{cost function}) and (\ref{optimal control policy}), we need to calculate the inter-UAV interference and jamming perceived by agent $i$ when minimizing the cost function of each slot. It requires extensive messages exchange among the $N$ agents. Therefore, before establishing the JA-MFG, we adopt the mean field approximation to the inter-UAV interference $I_{i^{'}}^{\rm U}(t)$ and jamming $I_{i^{'}}^{\rm J}(t)$. The basic idea of the mean field approximation is that all other UAVs and jammers use the same average channel gain $\overline{g}^{\rm U}$ and $\overline{g}^{\rm J}$ to UAV$_{i^{'}}$. Based on this idea, $I_{i^{'}}^{\rm U}(t)$ and $I_{i^{'}}^{\rm J}(t)$ are simplified with the inter-UAV interference mean field and jamming mean field.
In our JA-MFG power control scheme, the mean field is defined as the probability density distribution of the energy states of all UAVs at time $t$, and it can be expressed as
\begin{equation}\label{mean field}
m(t,e)=\lim\limits_{N\to\infty}\frac{1}{N}\sum\limits_{i=1}^{N}\mathds{1}_{e_{i}(t)=e},
\end{equation}
where
$\mathds{1}_{e_{i}(t)=e}=\left\{
\begin{aligned}
&1,e_{i}(t)=e,\\
&0,e_{i}(t)\neq e.
\end{aligned}\right.$

\begin{proposition}
Assume that the transmit power of UAVs and jammers are $p_{(\cdot)}^{\rm U}$ and $p_{(\cdot)}^{\rm J}$, which are predefined. Let $p_{(\cdot)}^{\rm R}$ be the received power. When the channel gain between UAVs and between UAVs and jammers are $\overline{g}^{\rm U}$ and $\overline{g}^{\rm J}$ given above, the interference mean field and jamming mean field are given as,
\begin{equation}\small\label{detail interference mean field}
\begin{aligned}
\overline{I}_{i^{'}}^{\rm U}(t)&\approx (N-2){p}^{\rm U}(t)\overline{g}^{\rm U}(t)\\ &={p}^{\rm U}(t)\left(\frac{p_{1}^{\rm R}(t)p_{2}^{\rm J}(t)-p_{2}^{\rm R}(t)p_{1}^{\rm J}(t)}{p_{1}^{\rm U}(t)p_{2}^{\rm J}(t)-p_{2}^{\rm U}(t)p_{1}^{\rm J}(t)}-g_{i,i^{'}}^{\rm U}(t)\right),
\end{aligned}
\end{equation}

\begin{equation}\small\label{detail jamming mean field}
\overline{I}_{i^{'}}^{\rm J}(t)\!\approx\! Mp_m^{\rm J}(t)\overline{g}^{\rm J}(t)\!=\!p_m^{\rm J}(t)\frac{p_{1}^{\rm R}(t)p_{2}^{\rm U}(t)-p_{2}^{\rm R}(t)p_{1}^{\rm U}(t)}{p_{1}^{\rm J}(t)p_{2}^{\rm U}(t)-p_{2}^{\rm J}(t)p_{1}^{\rm U}(t)}.
\end{equation}
\label{thm-1}
\end{proposition}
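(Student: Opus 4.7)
The plan is to apply the mean field approximation first, and then identify the two unknown average channel gains in closed form by solving a small linear system obtained from two predefined probing configurations. The first equalities in both claimed displays follow almost immediately from the approximation, while the second equalities reduce to Cramer's rule on a $2\times 2$ system.

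First, I would invoke the mean field hypothesis stated in the paragraph just above the proposition: all $N-2$ interfering UAVs (those indexed by $n\neq i,i'$) share a common average channel gain $\overline{g}^{\rm U}$ to UAV$_{i'}$, and all $M$ jammers share a common average channel gain $\overline{g}^{\rm J}$. Substituting these averages into the definitions $I_{i'}^{\rm U}=\sum_{n\neq i,i'} p_n^{\rm U} g_{n,i'}^{\rm U}$ and $I_{i'}^{\rm J}=\sum_{m=1}^{M} p_m^{\rm J} g_{m,i'}^{\rm J}$, and factoring out the common UAV power $p^{\rm U}$ and jamming power $p^{\rm J}$, yields the first equalities $\overline{I}_{i'}^{\rm U}\approx (N-2)p^{\rm U}\overline{g}^{\rm U}$ and $\overline{I}_{i'}^{\rm J}\approx M p^{\rm J}\overline{g}^{\rm J}$ directly from (\ref{SINR of general player}).

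Next, I would express these aggregate gains in terms of the predefined probing quantities. Define $G^{\rm U}:=(N-2)\overline{g}^{\rm U}+g_{i,i'}^{\rm U}$ and $G^{\rm J}:=M\overline{g}^{\rm J}$. Under each probing configuration $k\in\{1,2\}$, with all UAVs transmitting at $p_k^{\rm U}$ and all jammers at $p_k^{\rm J}$, the total power collected at UAV$_{i'}$ (desired signal plus UAV interference plus jamming) reduces under the mean field simplification to the linear relation
\begin{equation*}
p_k^{\rm R} \;=\; p_k^{\rm U}\, G^{\rm U} \;+\; p_k^{\rm J}\, G^{\rm J}, \qquad k=1,2.
\end{equation*}
Applying Cramer's rule gives exactly the two fractions in (\ref{detail interference mean field}) and (\ref{detail jamming mean field}), provided the non-degeneracy condition $p_1^{\rm U}p_2^{\rm J}\neq p_2^{\rm U}p_1^{\rm J}$ holds so that the coefficient matrix is invertible. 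Substituting $(N-2)\overline{g}^{\rm U}=G^{\rm U}-g_{i,i'}^{\rm U}$ back into $(N-2)p^{\rm U}\overline{g}^{\rm U}$ recovers the bracketed expression with the $-g_{i,i'}^{\rm U}$ correction term, and substituting $M\overline{g}^{\rm J}=G^{\rm J}$ into $Mp^{\rm J}\overline{g}^{\rm J}$ yields the jamming mean field formula.

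The algebra (inverting a $2\times 2$ matrix and back-substituting) is mechanical, so I expect the real obstacle to be on the modeling side. Justifying that the $N-2$ heterogeneous gains $g_{n,i'}^{\rm U}$ can legitimately be replaced by a single average $\overline{g}^{\rm U}$ demands an argument that the empirical distribution of channel gains concentrates around its mean as $N\to\infty$; this is standard in the mean field game literature but is typically invoked rather than derived from first principles, and I would flag this as the weak link in the argument. A secondary issue is to ensure that the two probing power pairs are chosen to make the determinant $p_1^{\rm U}p_2^{\rm J}-p_2^{\rm U}p_1^{\rm J}$ nonzero, since otherwise the aggregate gains cannot be uniquely identified from the measured $p_k^{\rm R}$.
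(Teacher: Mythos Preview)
Your proposal is correct and follows essentially the same approach as the paper: set up two linear equations for the received power at $\mathrm{UAV}_{i'}$ under two probing configurations, solve the resulting $2\times 2$ system (your Cramer's-rule step is exactly what the paper does by ``solving the above two equations simultaneously''), and substitute back. Your auxiliary variables $G^{\rm U}=(N-2)\overline{g}^{\rm U}+g_{i,i'}^{\rm U}$ and $G^{\rm J}=M\overline{g}^{\rm J}$ are a clean reparametrization of the paper's unknowns $\overline{g}^{\rm U},\overline{g}^{\rm J}$, and your remarks on concentration and the non-degeneracy condition $p_1^{\rm U}p_2^{\rm J}\neq p_2^{\rm U}p_1^{\rm J}$ are valid caveats that the paper leaves implicit.
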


\begin{proof}\renewcommand{\qedsymbol}{}
The detailed proof is given in Appendix A.
\end{proof}
According to Proposition 1, the cost function for UAV$_{i^{'}}$ in (\ref{cost function}) can be reexpressed as
\begin{equation}\label{mean field cost function}
\begin{aligned}
c_i(t)&=\omega_1[p_i^{\rm U}(t)g_{i,i^{'}}^{\rm U}(t)-\gamma_{\rm th}(\overline{I}_{i^{'}}^{\rm U}(t)+\overline{I}_{i^{'}}^{\rm J}(t)+\sigma(t)^2)]^{2}\\
&+\omega_2(\overline{I}_{i^{'}}^{\rm U}(t)+\overline{I}_{i^{'}}^{\rm J}(t))^{2}.
\end{aligned}
\end{equation}
\subsubsection{Jamming-Aware Mean Field Game}\label{section3Cc}
In the JA-MFG, each agent is identical and exchangeable and adopts the same control policy. The agent only cares about collective policies of all other agents rather than each others' states\cite{24}. This collective behavior is reflected in (\ref{mean field}). Hence, JA-MFG can be regarded as a $2$-body game. One participant is an agent, determining the optimal control policy, and the evolution of the optimal control policy is described by the HJB backward equation. The other one is a mean field, and its evolution is described by the Fokker-Planck-Kolmogorov (FPK) forward PDE~\cite{25}.
\begin{proposition}
Considering the HJB backward equation describes the evolution of optimal control of agents, thus the optimal power is expressed as
\begin{equation}\label{optimal power}
\widetilde p^{\rm U}(t,e)=\frac{\gamma_{\rm th}(\overline{I}_{i^{'}}^{\rm U}(t)+\overline{I}_{i^{'}}^{\rm J}(t)+\sigma(t)^2)}{g_{i,j}^{\rm U}(t)}+\frac{\frac{\partial{u(t,e(t))}}{\partial{e}}}{2\omega_1{g_{i,j}^{\rm U}}(t)^{2}}.
\end{equation}
\label{thm-2}
\end{proposition}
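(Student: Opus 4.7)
The plan is to obtain $\widetilde p^{\rm U}(t,e)$ by applying the first-order optimality condition to the inner minimization inside the Hamiltonian, using the mean-field cost in (\ref{mean field cost function}). The key observation is that after the mean-field approximation of Proposition 1, the aggregate interference terms $\overline{I}_{i^{'}}^{\rm U}(t)$ and $\overline{I}_{i^{'}}^{\rm J}(t)$ are treated as exogenous with respect to the individual agent's action $p^{\rm U}$: they depend on the population distribution $m(t,e)$ and on the jammers' power, but not on the tagged UAV's own transmit power. Hence the only $p^{\rm U}$-dependent piece of the cost is the SINR-cost $c_{1}$, and the interference-cost $c_{2}$ acts as a constant inside the minimization.

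First I would substitute (\ref{mean field cost function}) into the Hamiltonian
\begin{equation}
H\!\left(e,\tfrac{\partial u}{\partial e}\right)=\min_{p^{\rm U}}\Bigl\{\omega_1\bigl[p^{\rm U}g_{i,j}^{\rm U}-\gamma_{\rm th}(\overline{I}^{\rm U}+\overline{I}^{\rm J}+\sigma^2)\bigr]^{2}+\omega_2(\overline{I}^{\rm U}+\overline{I}^{\rm J})^{2}-p^{\rm U}\tfrac{\partial u}{\partial e}\Bigr\},
\end{equation}
and then differentiate the bracketed objective with respect to $p^{\rm U}$ and set the derivative to zero. This yields the stationarity condition
\begin{equation}
2\omega_1 g_{i,j}^{\rm U}\bigl[p^{\rm U}g_{i,j}^{\rm U}-\gamma_{\rm th}(\overline{I}^{\rm U}+\overline{I}^{\rm J}+\sigma^2)\bigr]-\tfrac{\partial u}{\partial e}=0,
\end{equation}
which, solved for $p^{\rm U}$, immediately gives the formula (\ref{optimal power}).

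To confirm that this critical point is in fact the minimizer, I would check the second-order condition: the second derivative of the integrand with respect to $p^{\rm U}$ equals $2\omega_1(g_{i,j}^{\rm U})^{2}>0$, so the objective is strictly convex in $p^{\rm U}$ and the stationary point is the unique unconstrained minimizer. Finally, I would note that the admissibility constraint (\ref{P1i}) requires $p^{\rm U}\in[0,p_{\max}^{\rm U}]$, and in the standard MFG treatment the stationary solution is taken in the interior of this interval, with the boundary cases handled by projection; this is the mild technical point where some care is warranted, but it does not affect the analytical form given in (\ref{optimal power}). Overall the argument is essentially a one-variable quadratic FOC, and the main subtlety worth emphasizing in the writeup is the mean-field decoupling that lets us treat $\overline{I}^{\rm U}$ and $\overline{I}^{\rm J}$ as constants during the per-agent minimization.
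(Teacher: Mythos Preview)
Your proposal is correct and follows essentially the same route as the paper: substitute the mean-field cost (\ref{mean field cost function}) into the Hamiltonian, take the first-order condition in $p^{\rm U}$, and solve. You add a second-order check and a remark on the box constraint (\ref{P1i}) that the paper omits, but the core argument is identical.
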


\begin{proof}\renewcommand{\qedsymbol}{}
Please see Appendix B.
\end{proof}

According to the above definition of the relationship between the mean field and the FPK equation, we can get the FPK equation as follows
\begin{equation}\label{FPK}
\frac{\partial{m(t,e)}}{\partial{t}}+\frac{\partial}{\partial{e}}(m(t,e)p^{\rm U}(t,e))+\frac{\nu_{t}^{2}}{2}\frac{\partial^{2}{m(t,e)}}{\partial^{2}{e}}=0.
\end{equation}

To sum up, the optimal control for each agent is dominated by the HJB equation and relies on the mean field of agents. Meanwhile, the evolutionary rule of the mean field is described by the FPK equation and affected by optimal control policy. Therefore, the JA-MFG can be characterized by the two coupled PDEs, one HJB equation, and one FPK equation. In this way, solving $N$ PDEs is transited into solving these two PDEs.

\subsubsection{Solution to Jamming-Aware Mean Field Game}\label{section3Cd}
To make the above two PDEs tractable, the finite difference method is utilized to replace them with difference equations\cite{26}. First, we discretize the slot $[(\lambda-1)T,\lambda T]$ into $Y$ time points, and the time interval is $\Delta t=\frac{T}{Y}$, as discussed in Sec.~\ref{section3}. The initial energy $e(0)$ is discretized into $Z$ energy points, and the energy interval is $\Delta e=\frac{e(0)}{Z}$.

Then, we introduce the finite difference operators as follows,
\begin{equation}\label{first derivatives with respect to time}
\frac{\partial{\Upsilon(t,e)}}{\partial{t}}=\frac{\Upsilon(t+1,e)-\Upsilon(t,e)}{\Delta t},
\end{equation}
\begin{equation}\label{first derivatives with respect to state}
\frac{\partial{\Upsilon(t,e)}}{\partial{e}}=\frac{\Upsilon(t,e)-\Upsilon(t,e-1)}{\Delta e},
\end{equation}
\begin{equation}\label{second derivatives with respect to time}
\frac{\partial^{2}{\Upsilon(t,e)}}{\partial^{2}{e}}=\frac{\Upsilon(t,e+1)-2\Upsilon(t,e)+\Upsilon(t,e-1)}{\Delta e}.
\end{equation}
Substituting (\ref{first derivatives with respect to time}), (\ref{first derivatives with respect to state}), and (\ref{second derivatives with respect to time}) into (\ref{FPK}), the FPK equation can be discretized as (\ref{FPK equation}), which can be rewritten as
\begin{figure*}
\hrulefill
\begin{equation}\label{FPK equation}
\frac{m(t+1,e)-m(t,e)}{\Delta t}+\frac{m(t,e)p^{\rm U}(t,e)-m(t,e-1)p^{\rm U}(t,e-1)}{\Delta e}+\frac{\nu_{t}^{2}}{2}\frac{m(t,e+1)-2m(t,e)+m(t,e-1)}{\Delta e}^{2}=0.
\end{equation}
\end{figure*}
\begin{equation}\label{FPK equation variation}
\begin{aligned}
m(t+1,e)&=m(t,e)\\
&+\frac{\Delta t}{\Delta e}[m(t,e-1)p^{\rm U}(t,e-1)-m(t,e)p^{\rm U}(t,e)]\\
&+\frac{\nu_{t}^{2}\Delta t}{2(\Delta e)^{2}}[2m(t,e)-m(t,e+1)-m(t,e-1)].
\end{aligned}
\end{equation}
Similarly, we can solve the HJB equation, which is shown in (\ref{HJB evolution equation}).
\begin{figure*}
\begin{equation}\label{HJB evolution equation}
u(t-1,e)=u(t,e)+\Delta t[c(p^{\rm U}(t,e)m(t,e))-p^{\rm U}(t,e)\frac{u(t,e)-u(t,e-1)}{\Delta e}]+\frac{\nu_{t}^{2}\Delta t}{2(\Delta e)^{2}}[u(t,e+1)-2u(t,e)+u(t,e-1)].
\end{equation}
\hrulefill
\end{figure*}

Since the optimal control and the mean field interact with each other, we need to update (\ref{optimal power}), (\ref{FPK equation variation}), and (\ref{HJB evolution equation}) alternately till the game reaches mean field equilibrium (MFE), which is described in Algorithm 3. So far, the solution to the subproblem of power control is achieved in all slots.
\begin{algorithm}
\caption{JA-MFG Power Control Algorithm}
\label{alg3}
\begin{algorithmic}[3]
\REQUIRE ~~\\
Initialize $p^{\rm U}(t,e), m((\lambda-1)T,e), u(\lambda T,e)=0, k=1$.\\
Set $I_{\rm max}$ as the maximum iterations.\\
\WHILE {$k<I_{\rm max}$}
\STATE Update $m(t+1,e)$, $u(t-1,e)$, and $\widetilde p^{\rm U}(t,e)$ using (\ref{FPK equation variation}), (\ref{HJB evolution equation}), and (\ref{optimal power}), respectively.
\STATE $k=k+1$.
\ENDWHILE
\end{algorithmic}
\end{algorithm}

\subsection{Dynamic Collaboration Approach}\label{section3D}
The subproblems of target association, trajectory planning, and power control are addressed in the above three subsections, respectively. However, the minimum total interference in problem (\ref{P1}) cannot be obtained by solving these subproblems separately. To this end, we propose a dynamic collaboration approach to adaptively update the sub-swarms division, UAV trajectory, and transmit power. The detailed collaboration procedure is shown in
Fig. \ref{fig2}.

\begin{figure}
  \centering
  \includegraphics[width=5cm]{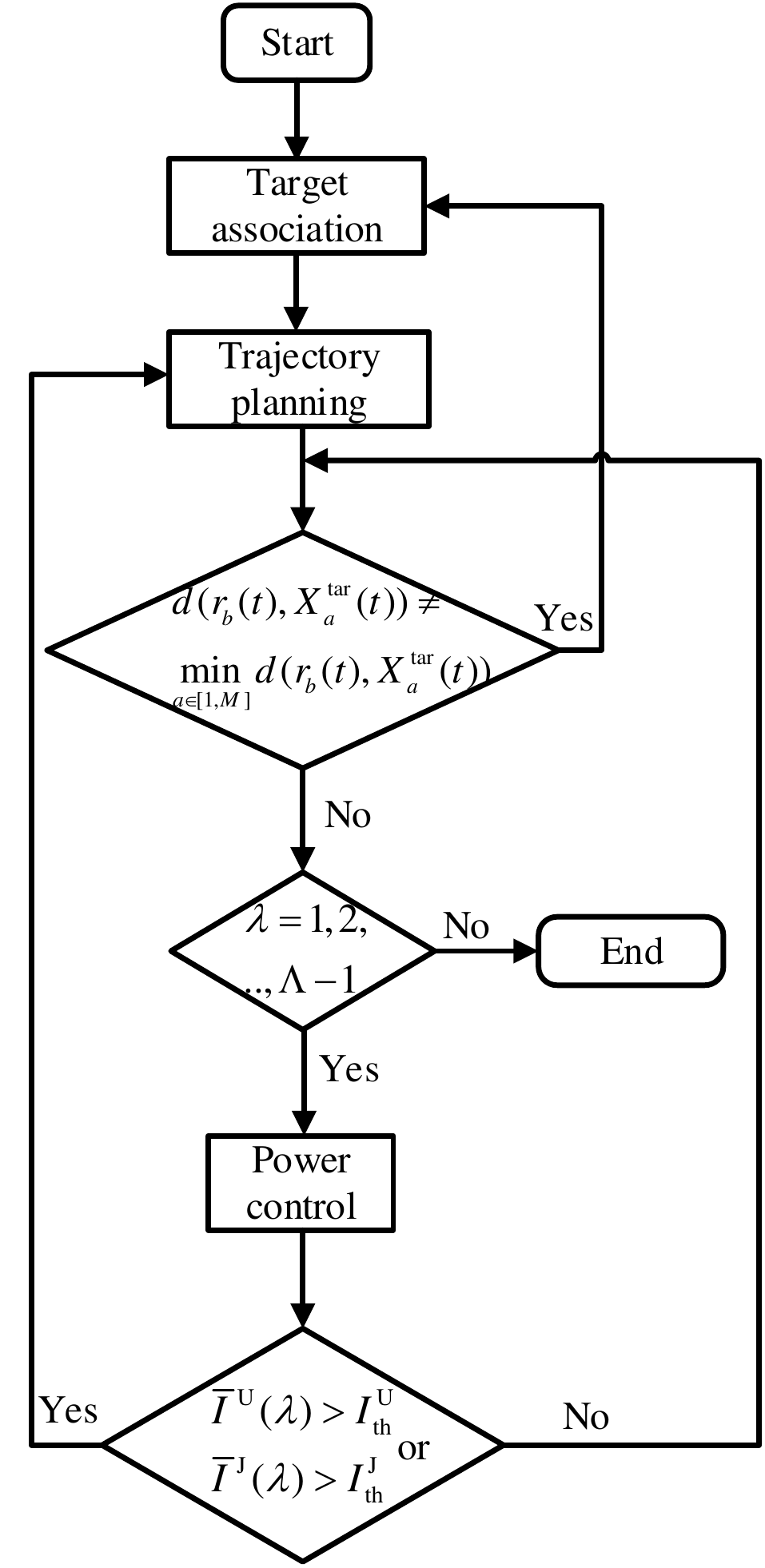}
  \caption{Dynamic collaboration approach of target association, trajectory planning, and power control.}
  \label{fig2}
\end{figure}

First, the CETA algorithm is adopted to divide UAV swarm into multiple sub-swarms and match them to targets, which determines the trajectory of UAV by the associated tracking target. At the same time, the JSSCT-APF is invoked to plan the trajectory for UAV. According to the trajectory of UAV, the distance between UAVs and their associated targets can be obtained. Once this distance changes greatly, we need to change the tracking target of the UAV immediately. This motivates us to reassociate target, where the UAV swarm is redivided into sub-swarms and associate with the corresponding targets as shown in Sec.~\ref{section3A}. To reduce the complexity of frequent target association, we propose a criterion for target reassociation. Specifically, it is supposed that the $a$-th target is tracked by UAV$_{i}$ currently, and UAV$_{i}$ belongs to $b$-th sub-swarm. When the $a$-th target is not the closest one to the centroid of the $b$-th sub-swarm, i.e., ${\forall b\in[1,M]},{d(r_{b}(t),X_{a}^{\rm tar}(t))\neq\mathop{\min}\limits_{a\in[1,M]}d(r_{b}(t),X_{a}^{\rm tar}(t))}$, the target association is triggered.

Second, the trajectory influences power control by the distance among UAVs and that between UAVs and jammers. Meanwhile, the JA-MFG power control algorithm is employed, which has an impact on the total interference. We intend to make power have an effect on trajectory through total interference. Therefore, we propose the adjustment of the repulsive force gain coefficient $k_{\rm rep}(\lambda)$ and $k_{\rm jam}(\lambda)$ according to inter-UAV interference and jamming, respectively. Specifically, $\overline{I}^{\rm U}(\lambda)$ and $\overline{I}^{\rm J}(\lambda)$ are calculated according to (\ref{average interference}). When they surpass their respective threshold ${I_{\rm th}^{\rm U}}$ and ${I_{\rm th}^{\rm J}}$, $k_{\rm rep}(\lambda)$ and $k_{\rm jam}(\lambda)$ are adjusted according to (\ref{rep variation}) and (\ref{jam variation}), and the trajectory planning is triggered.

Given the above, the proposed dynamic collaboration approach minimizes the total interference by jointly control target association, trajectory, and power. The dynamic collaboration approach of target association, trajectory planning, and power control is shown in Algorithm 4.
\begin{equation}\label{average interference}
\overline{I}^{\rm U}(\lambda)=\frac{1}{Y}\sum\limits_{t=1}^{Y}\overline{I}_{i^{'}}^{\rm U}(t),\quad\overline{I}^{\rm J}(\lambda)=\frac{1}{Y}\sum\limits_{t=1}^{Y}\overline{I}_{i^{'}}^{\rm J}(t).
\end{equation}
\begin{equation}\label{rep variation}
k_{\rm rep}(\lambda+1)=\left\{
\begin{array}{lr}
k_{\rm rep}(\lambda)+\frac{\overline{I}^{\rm U}(\lambda)-I_{\rm th}^{\rm U}}{I_{\rm th}^{\rm U}},{\overline{I}^{\rm U}(\lambda)>{I_{\rm th}^{\rm U}}},\\
k_{\rm rep}(\lambda),\qquad\qquad\quad{\overline{I}^{\rm U}(\lambda)\leq{I_{\rm th}^{\rm U}}}.\\
\end{array} \right.
\end{equation}
\begin{equation}\label{jam variation}
k_{\rm jam}(\lambda+1)=\left\{
\begin{array}{lr}
k_{\rm jam}(\lambda)+\frac{\overline{I}^{\rm J}(\lambda)-I_{\rm th}^{\rm J}}{I_{\rm th}^{\rm J}},{\overline{I}^{\rm J}(\lambda)>{I_{\rm th}^{\rm J}}},\\
k_{\rm jam}(\lambda),\qquad\qquad\quad{\overline{I}^{\rm J}(\lambda)\leq{I_{\rm th}^{\rm J}}}.\\
\end{array} \right.
\end{equation}

\begin{algorithm}
\caption{Dynamic Collaboration Approach}
\label{alg4}
\begin{algorithmic}[4]
\FOR{$t=1$ to $\Lambda T$}
\STATE Target association according to Algorithm 1.
\STATE Plan the trajectory for UAVs using Algorithm 2.
\IF {${\forall b\!\in\![1,M]},{d(r_{b}(t),X_{a}^{\rm tar}(t))\!=\!\!\mathop{\min}\limits_{a\in[1,M]}d(r_{b}(t),X_{a}^{\rm tar}(t))}$}
\IF {$t=\lambda T$}
\STATE Get the optimal transmit power according to Algorithm 3.
\STATE Obtain the average interference $\overline{I}^{\rm U}(\lambda)$ and the average jamming $\overline{I}^{\rm J}(\lambda)$ according to (\ref{average interference}).
\IF {$\overline{I}^{\rm U}(\lambda)>{I_{\rm th}^{\rm U}}$}
\STATE {\em Trajectory planning is triggered}.
\STATE Update $k_{\rm rep}(\lambda)$ according to (\ref{rep variation}) and replan the trajectory for UAVs using Algorithm 2.
\ENDIF
\IF {$\overline{I}^{\rm J}(\lambda)>{I_{\rm th}^{\rm J}}$}
\STATE {\em Trajectory planning is triggered}.
\STATE Update $k_{\rm jam}(\lambda)$ according to (\ref{jam variation}) and replan the trajectory for UAVs using Algorithm 2.
\ENDIF
\ENDIF
\ELSE
\STATE {\em Target association is triggered}.
\STATE Target association according to Algorithm 1.
\STATE Repeat steps 4 to 14.
\ENDIF
\ENDFOR
\end{algorithmic}
\end{algorithm}

\section{Simulation Results}\label{section4}
The proposed algorithms for target association, trajectory planning, and power control are validated by simulation in this section. In the MTT scenario, $100$ UAVs in the swarm track $5$ targets with $5$ malicious jammers\cite{50}, and these UAVs are distributed randomly in a certain area initially. The simulation parameters are listed in Table~\ref{simulation_parameters}.
\begin{table}[t]
\caption{Simulation Parameters}
\label{simulation_parameters}
\centering
\begin{tabular}{cc|cc}
\hline
Parameters & Value & Parameters & Value\\
\hline
$k_{\rm att}$ & 5 & $V_{\rm max}$ & 20m/s\cite{28}\\
$k_{\rm rep}$, $k_{\rm obs}$, $k_{\rm jam}$ & 10 & $\sigma^2$ & $10^{-8}\rm W$\cite{27}\\
$k_{\rm ext}$ & 5 & $\gamma_{\rm th}$ & $3\rm dB$\cite{27}\\
$d_0$ & 5 & $e(0)$ & $1\rm J$\\
$\widetilde{L}$ & 0.5 & $p_{\rm max}^{\rm U}$ & $0.1\rm W$\cite{02}\\
$\Lambda$ & 15 & $p_{\rm max}^{\rm J}$ & $0.4\rm W$\cite{03}\\
$\alpha$ & $4$ & $I_{\rm th}^{\rm U}$ , $I_{\rm th}^{\rm J}$ & $10^{-3}$\rm W\\
\hline
\end{tabular}
\end{table}
\subsection{Performance of CETA Algorithm}\label{section4A}
In this simulation, we compare the proposed CETA algorithm with the k-means and the fuzzy c-means (FCM) in \cite{04}. The following two evaluation metrics are used to measure the performance of the sub-swarm division.

1) Sum of the Squared Errors (SSE) is the dividing error of all UAVs, which measures the looseness of sub-swarm. A smaller SSE indicates a more compact sub-swarm. SSE is defined as
\begin{equation}\label{SSE}
SSE=\sum\limits_{m=1}^{M}\sum\limits_{i\in C_{m}}\left \| X_{i}-r_{m}\right \|^2.
\end{equation}

2) Silhouette Coefficient (SC) describes the similarity between a sub-swarm where the UAV is located and other sub-swarms. The range is from $-1$ to $+1$. The larger the SC, the better the sub-swarm division. When SC is $1$, it means that the UAV is far from the other sub-swarms. When SC is $0$, it indicates that the UAV may be on the boundary of two sub-swarms. When SC is negative, it implies that the UAV may be misclassified.
The SC of the UAV$_{i}$ is defined as
\begin{equation}\label{SC of each UAV}
SC(i)=\frac{b(i)-a(i)}{{\rm max}\{a(i),b(i)\}},
\end{equation}
where $a(i)=\frac{1}{\mid\Phi_{m}-1\mid}\sum\limits_{i,j\in C_{m},i\neq j}d(X_{i},X_{j})$ is the intra-sub-swarm dissimilarity, denoting the average distance between UAV$_{i}$ and other UAVs in the same sub-swarm. $b(i)=\min\limits_{m^{'}\neq m}\frac{1}{\Phi_{m^{'}}}\sum\limits_{j\in C_{m^{'}}}d(X_{i},X_{j})$ is the inter-sub-swarm dissimilarity, denoting the minimum of the average distance between UAV$_{i}$ and all UAVs in the different sub-swarms. For the UAV swarm, its SC is the average of the SCs of all UAVs, i.e., $SC=\frac{1}{N}\sum\limits_{i=1}^{N}SC(i)$.

\begin{table}
\caption{Results of Three Algorithms}
\label{results of three algorithms}
\centering
\resizebox{.99\columnwidth}{!}{
\begin{tabular}{cccc}
\hline
Clustering algorithms & Number of iterations & SSE$\ast(10^3)$ & SC\\
\hline
CETA & $1$ & $9.1589$ & $0.5966$\\
k-means & $19$ & $9.5229$ & $0.5711$\\
FCM & $70$ & $9.6223$ & $0.5461$\\
\hline
\end{tabular}
}
\end{table}
\begin{figure}
  \centering
  \includegraphics[width=8cm]{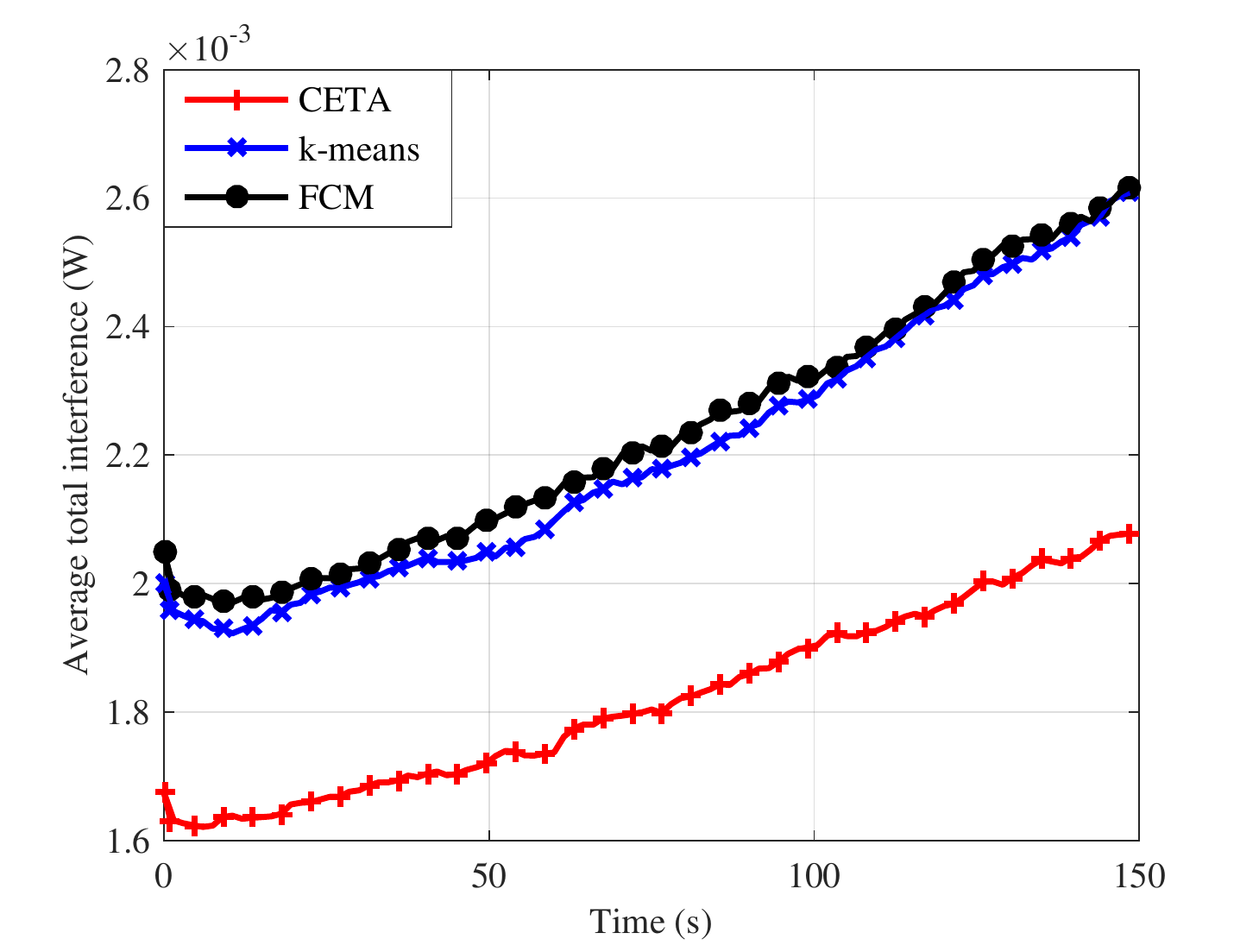}
  \caption{Average total interference of a UAV with different clustering algorithms.}
  \label{fig5}
\end{figure}
In simulations, we perform $1,000$ experiments to compare the performance of the CETA algorithm with benchmarks. The results of $1,000$ experiments are averaged, and shown in Table~\ref{results of three algorithms}. We can see that the proposed CETA algorithm has an excellent advantage in the number of iterations. For the performance of SSE and SC, our proposed algorithm outperforms k-means and FCM, and indicating that the proposed method can be more reasonably divided into multiple sub-swarms.
\begin{figure}
  \centering
  \includegraphics[width=8cm]{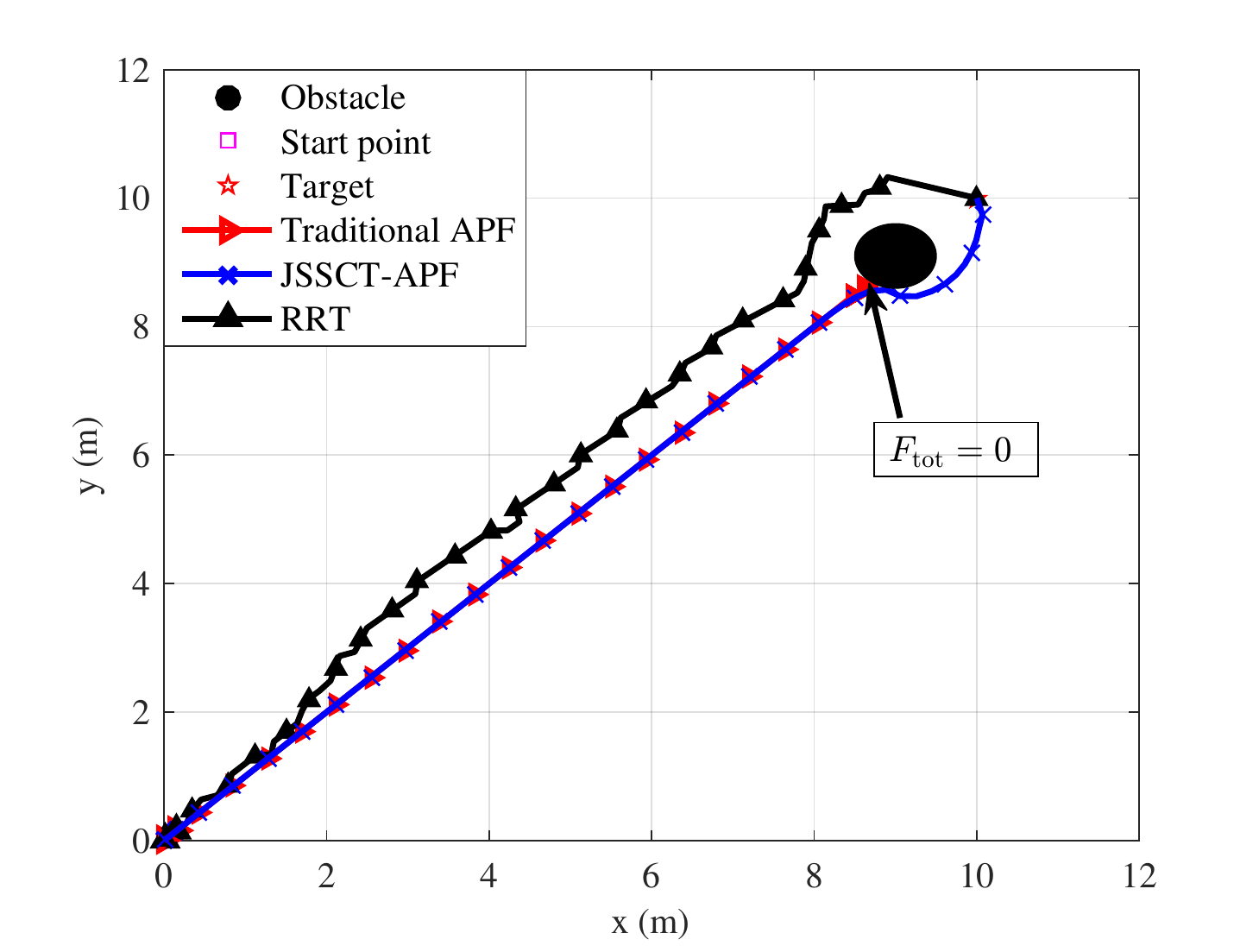}
  \caption{The trajectories of UAVs under three different trajectory planning algorithms when the UAV passes the balance point. The starting point is located at $(0,0)$, and the target is located at $(10,10)$.}
  \label{fig13a}
\end{figure}

Fig. \ref{fig5} exhibits the average total interference of a UAV during the tracking when leveraging CETA algorithm, k-means, and FCM in the dynamic collaboration approach, respectively. We can observe that the UAV's average total interference generally shows an upward trend over time because the distance among UAVs and between UAV and jammer decreases with time. Note that some small fluctuations in the interference are caused by the irregularity of the UAV's channel gain. Fig. \ref{fig5} indicates that our proposed CETA algorithm has better performance in alleviating interference. It is because that the sub-swarm division obtained by CETA algorithm is more compact than that obtained by the k-means and FCM as shown in Table~\ref{results of three algorithms}. As a result, the UAVs are mainly influenced by intra-subswarm interference with the CETA algorithm, and the UAVs are influenced by both intra-subswarm and inter-subswarm interference with the k-means and FCM.
\subsection{Performance of JSSCT-APF}\label{section4B}
\begin{figure}
  \centering
  \includegraphics[width=8cm]{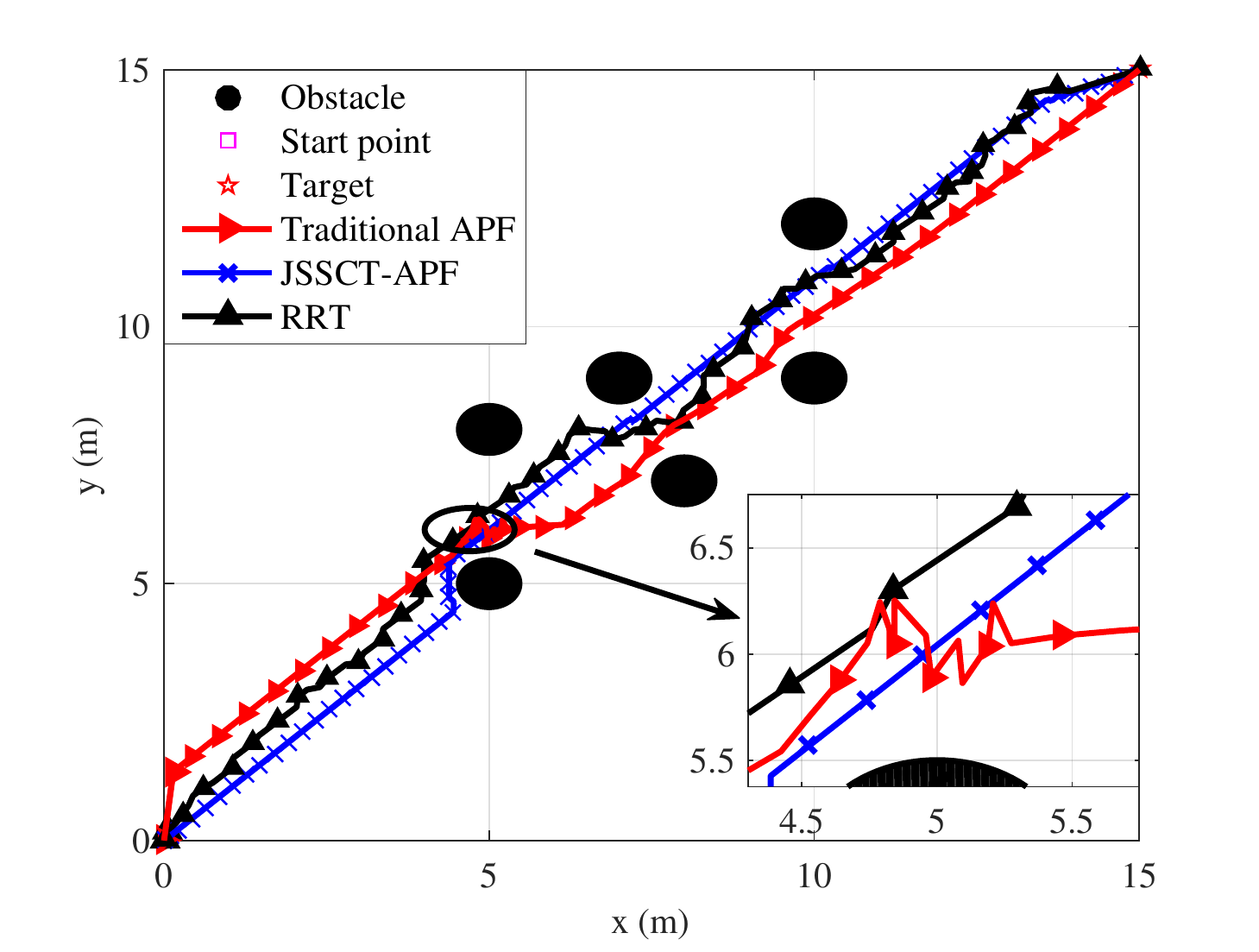}
  \caption{The UAV trajectories under three different trajectory planning algorithms when the UAV jitters. The starting point is located at $(0,0)$, and the target is located at $(15,15)$.}
  \label{fig13b}
\end{figure}

\begin{figure}
  \centering
  \includegraphics[width=8cm]{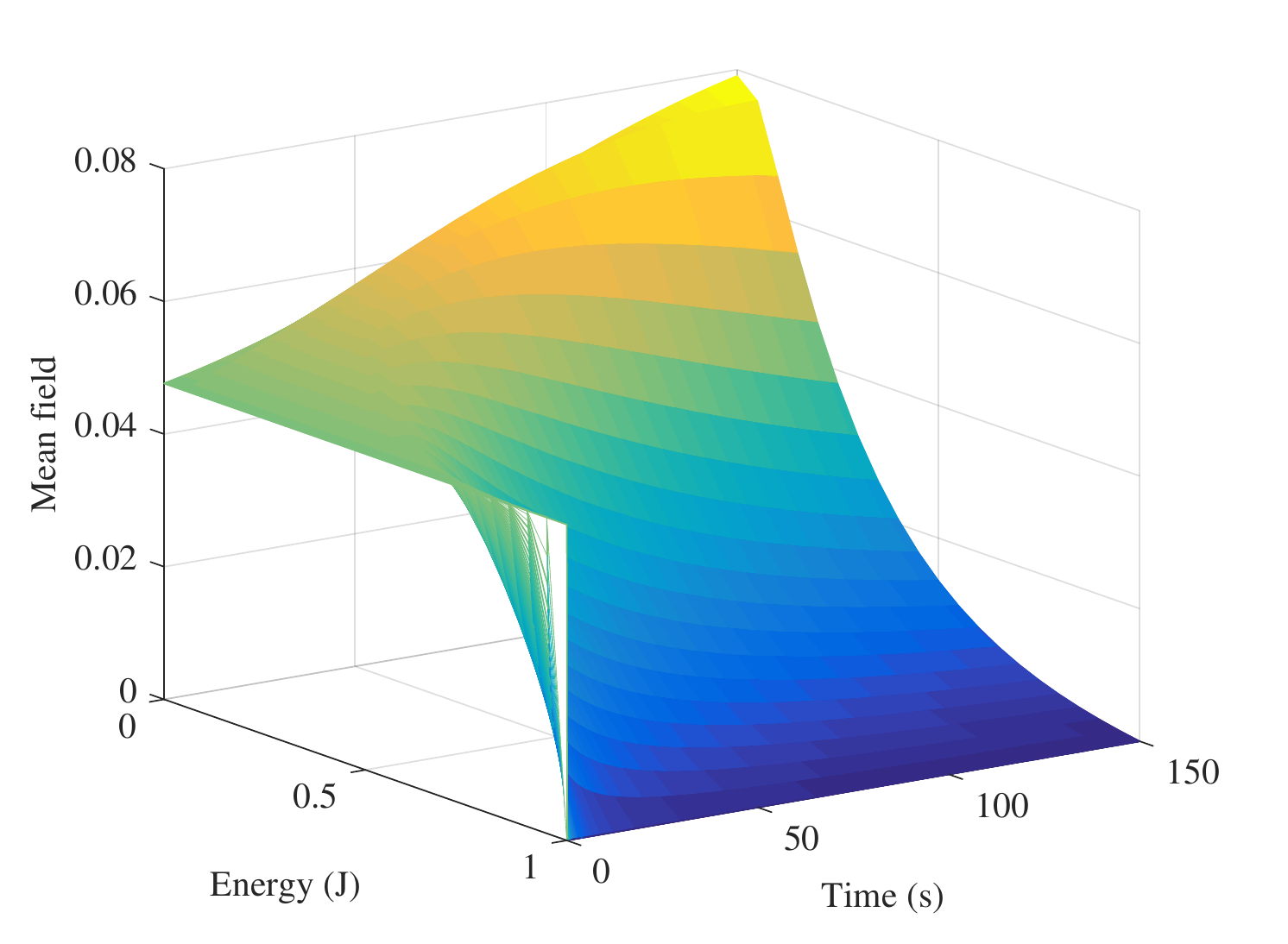}
  \caption{Mean field distribution w.r.t. time and energy state at MFE.}
  \label{fig6}
\end{figure}
\begin{figure}
  \centering
  \includegraphics[width=8cm]{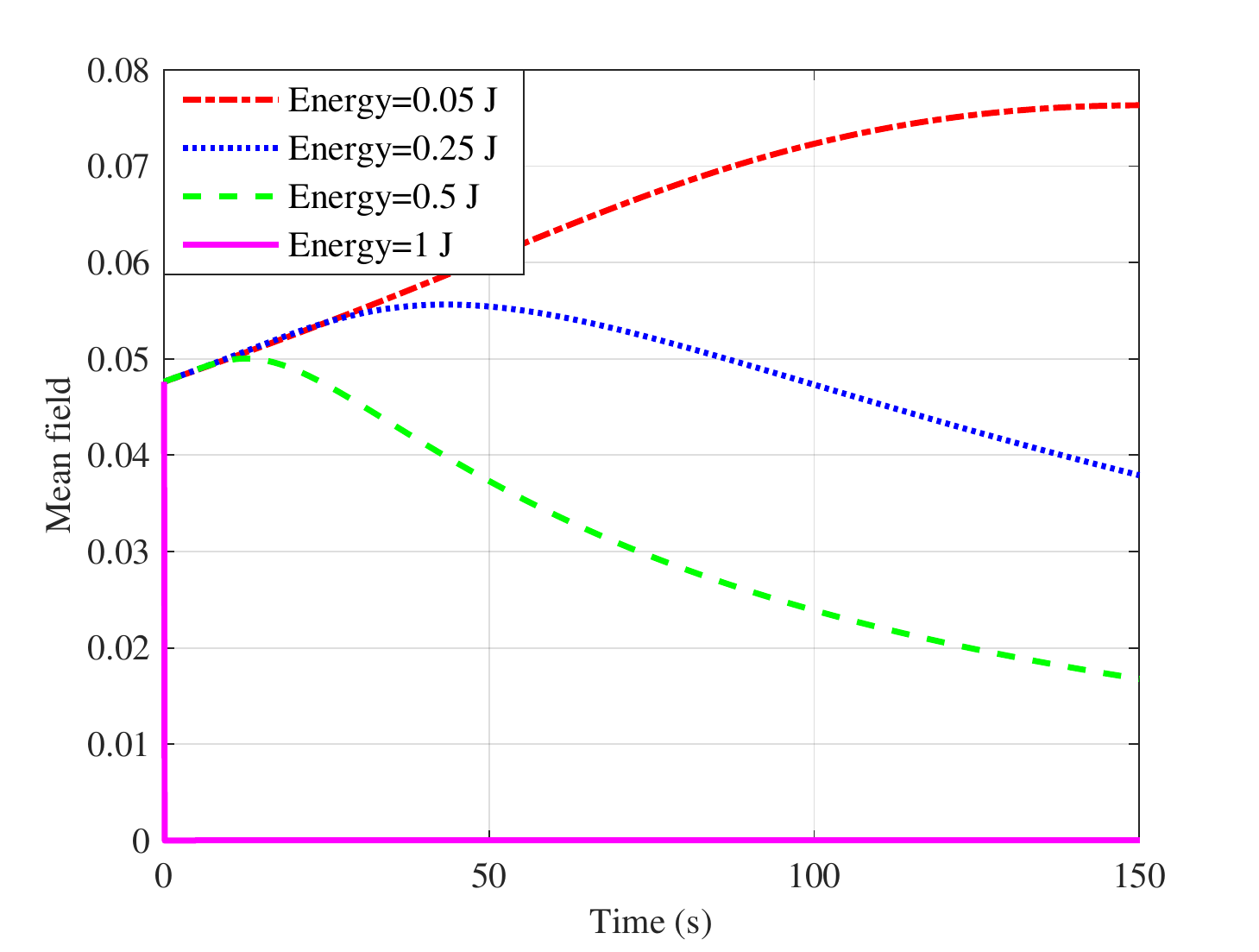}
  \caption{Mean field distribution under some specific energy levels with time.}
  \label{fig7}
\end{figure}
Fig. \ref{fig13a} and Fig. \ref{fig13b} show the trajectory of a UAV under three algorithms of the proposed JSSCT-APF, the traditional APF and the rapid-exploration random tree (RRT) in \cite{47} in two different cases. Since the settings of the parameters related to APFs in the existing literatures are different and only need to satisfy greater than zero, this paper refers to the settings of \cite{06}. As shown in Fig. \ref{fig13a}, when the UAV passes the balance point, it parks at this point and is unable to reach its target with the traditional APF while it successfully tracks the target with the JSSCT-APF and RRT. In Fig. \ref{fig13b}, the UAV jitters with the traditional APF when it flies through a narrow space. However, the JSSCT-APF and RRT prevent the UAV from jitter. The reason is that the JSSCT-APF can by correcting the turning angle erasure the jitter which is the inherent defect of the traditional APF. Furthermore, it can be seen that the trajectory with the JSSCT-APF is shorter and more smooth than that with the RRT. This is because the direction of path generation with the RRT has strong randomness, while the JSSCT-APF is always oriented towards the target. Therefore, it can be concluded that the proposed JSSCT-APF excels the traditional APF and the RRT in trajectory planning.

\subsection{Performance of JA-MFG}\label{section4C}
Fig. \ref{fig6} and Fig. \ref{fig8} show mean field distribution and the power control policy at MFE, respectively. In the simulations, the initial energy distribution $M(0,:)$ follows a uniform distribution~\cite{19,21}.

Fig. \ref{fig6} shows the mean field distribution w.r.t. time and energy state at MFE. As it is shown, the probability of UAVs at the zero energy level tends to increase over time, and the probability of UAVs at higher energy levels tends to decrease over time. The probability of UAVs at higher energy levels but not at the full energy level, does not drops to $0$ with time eventually. This means that not all UAVs run out of their energy. This is because the UAV will not increase its transmit power after meeting the minimum SNR requirements according to (\ref{optimal control policy}).

\begin{figure}
  \centering
  \includegraphics[width=8cm]{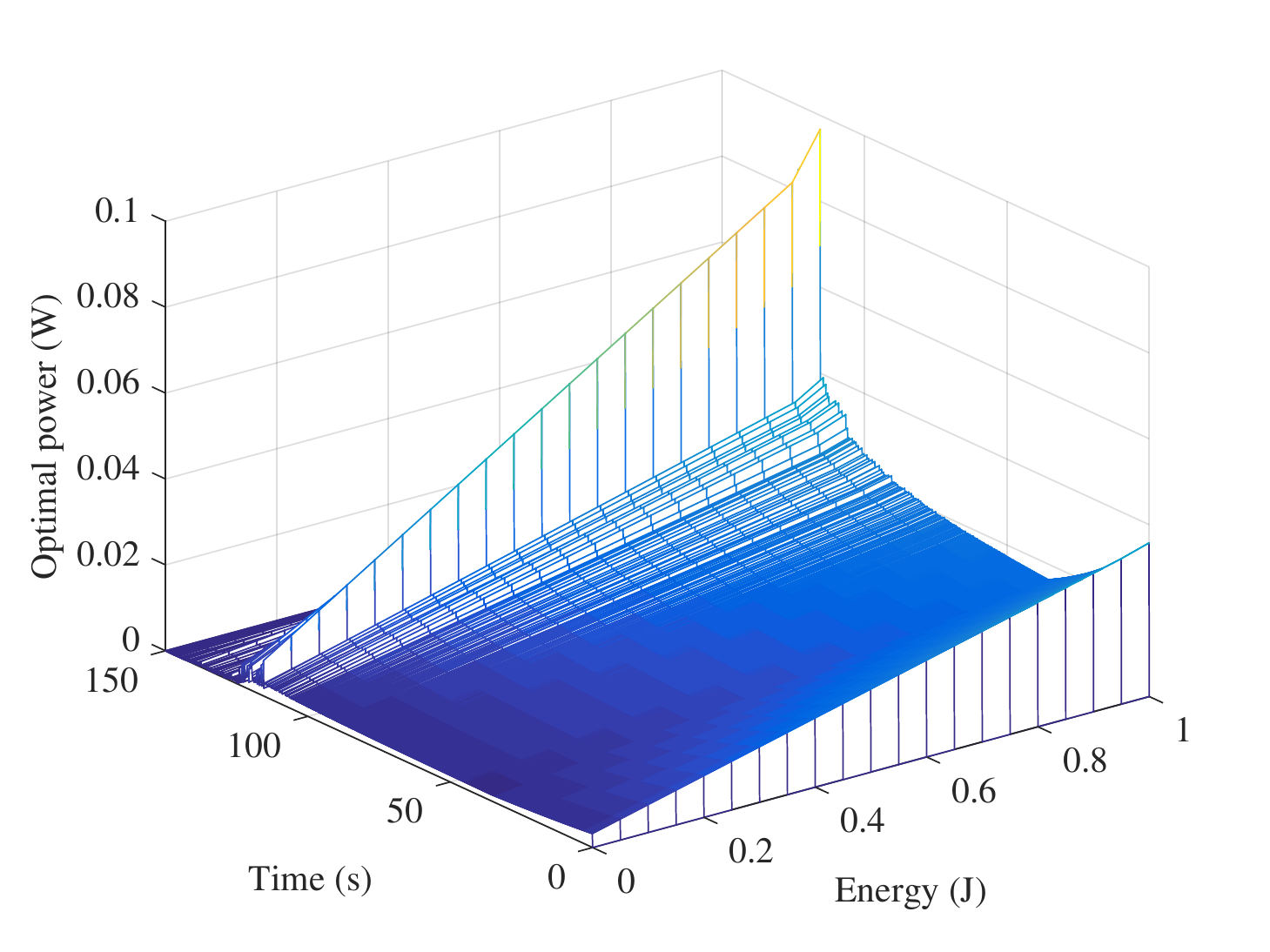}
  \caption{Optimal transmit power w.r.t. time and energy state at MFE.}
  \label{fig8}
\end{figure}
Fig. \ref{fig7} depicts several cross-sections of Fig. \ref{fig6} to better present the changes of the mean field distribution of UAVs at some specific energy levels with time. The probability of the UAV in all energy levels is equal when $t=0$ because of the uniform initial energy distribution. The probability of UAVs with $0.05 \rm J$ rises rapidly at first, and then remains almost unchanged. The probability of UAVs at full energy level immediately falls to zero at the beginning of the tracking. The reason is that the UAVs have to consume power to drive SINR as close to $\gamma_{\rm th}$ as possible.

Fig. \ref{fig8} illustrates the optimal power policy w.r.t. time and energy state at MFE. As it is illustrated, UAVs are constantly updating the transmit power to attain MFE. The UAVs with high energy level have a higher transmit power than ones with lower energy level.
\begin{figure}
\centering

\subfigure[]{
\begin{minipage}{0.49\linewidth}
\centering
\includegraphics[width=1.1\linewidth]{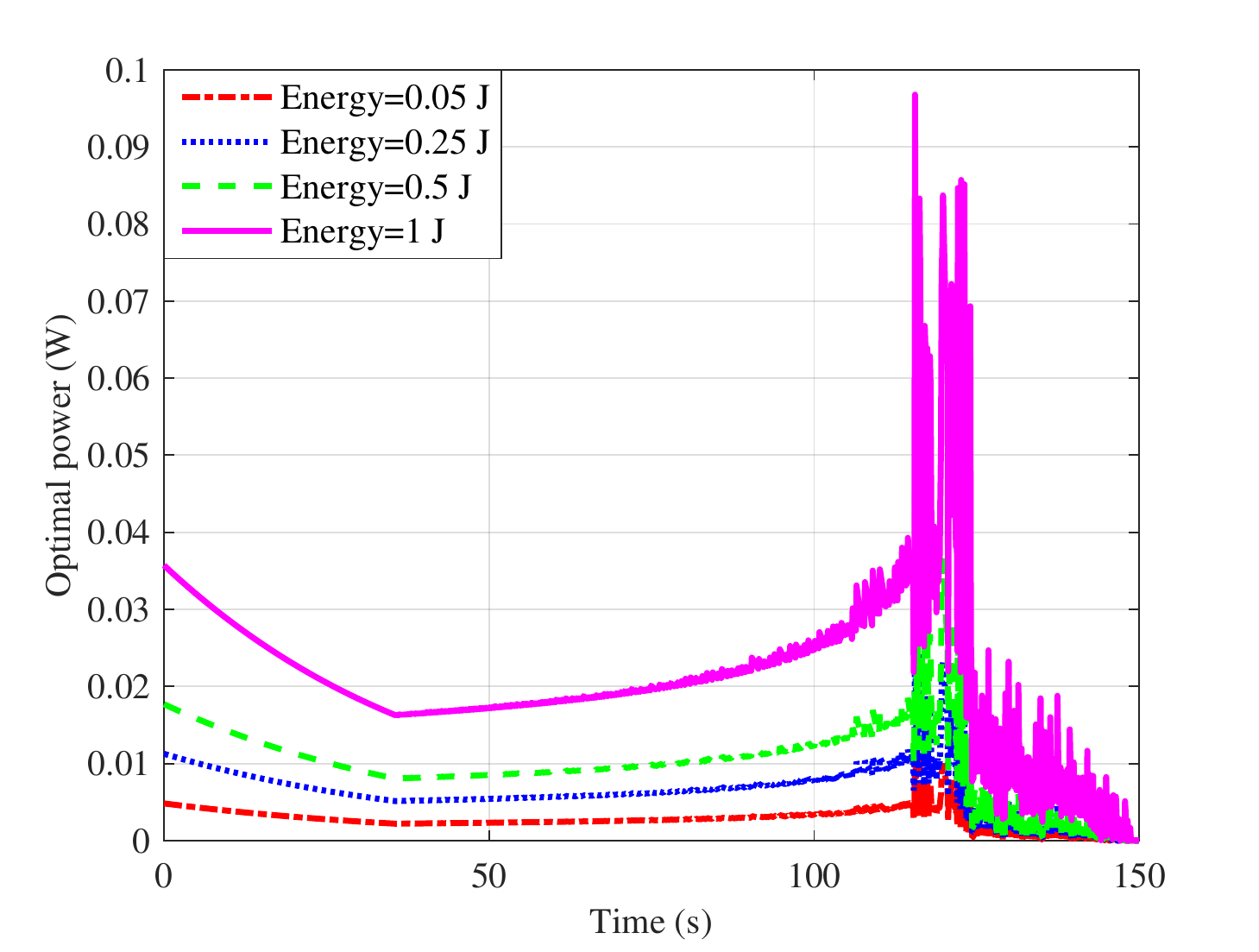}
\label{fig9a}
\end{minipage}%
}
\subfigure[]{
\begin{minipage}{0.49\linewidth}
\centering
\includegraphics[width=1.1\linewidth]{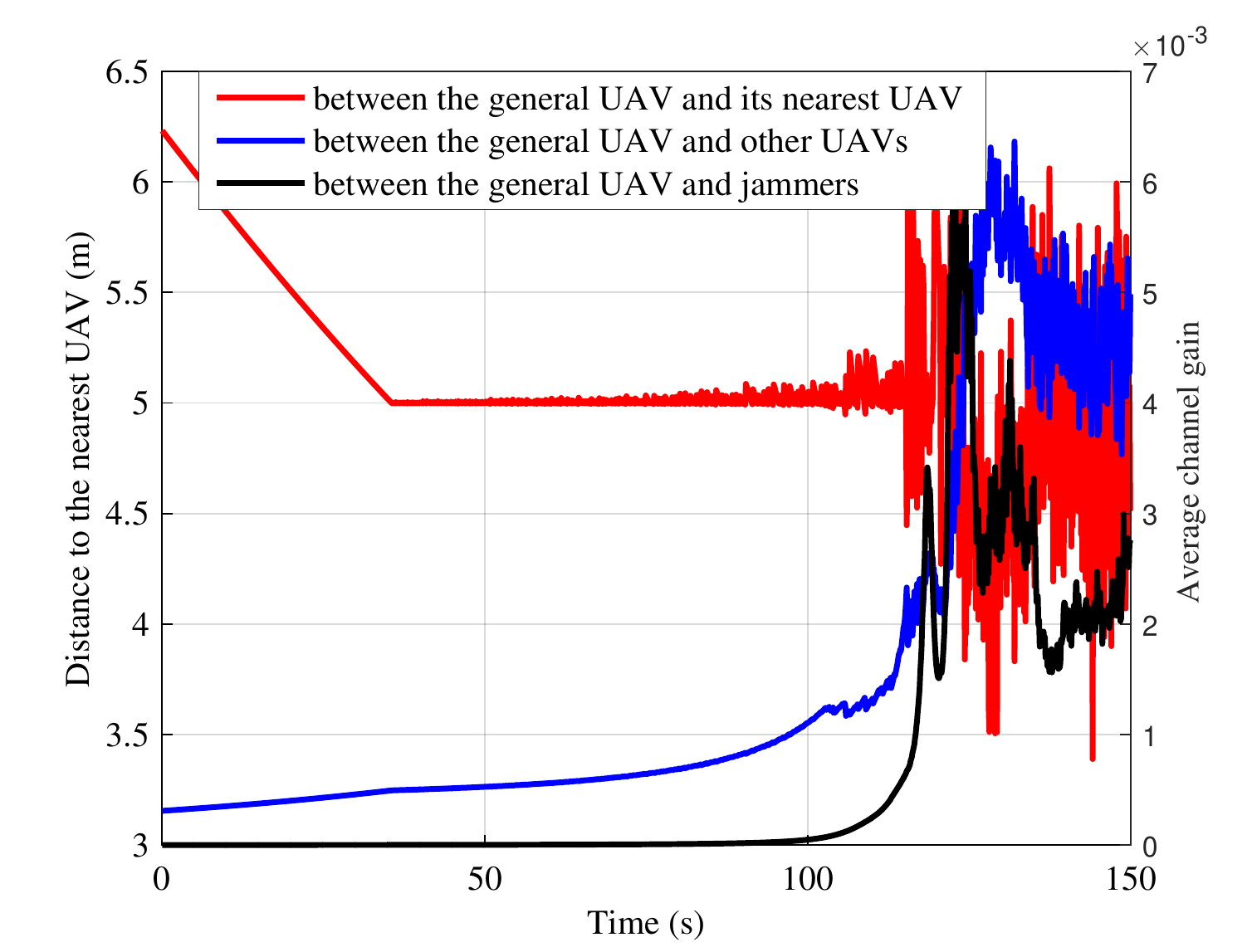}
\label{fig9b}
\end{minipage}
}%
\quad
\centering
\caption{(a) Optimal transmit power under some specific energy levels with time. (b) The solid red line reflects the variation of distance between a UAV and its nearest UAV with time. The solid blue line reflects the variation of average channel gain between a UAV and other UAVs with time. The solid black line reflects the variation of average channel gain between a UAV and jammers with time.}
\label{fig9}
\end{figure}

\begin{figure}
\vspace{-2.0em}
\centering

\subfigure[]{
\begin{minipage}[t]{8cm}
\centering
\includegraphics[width=8cm]{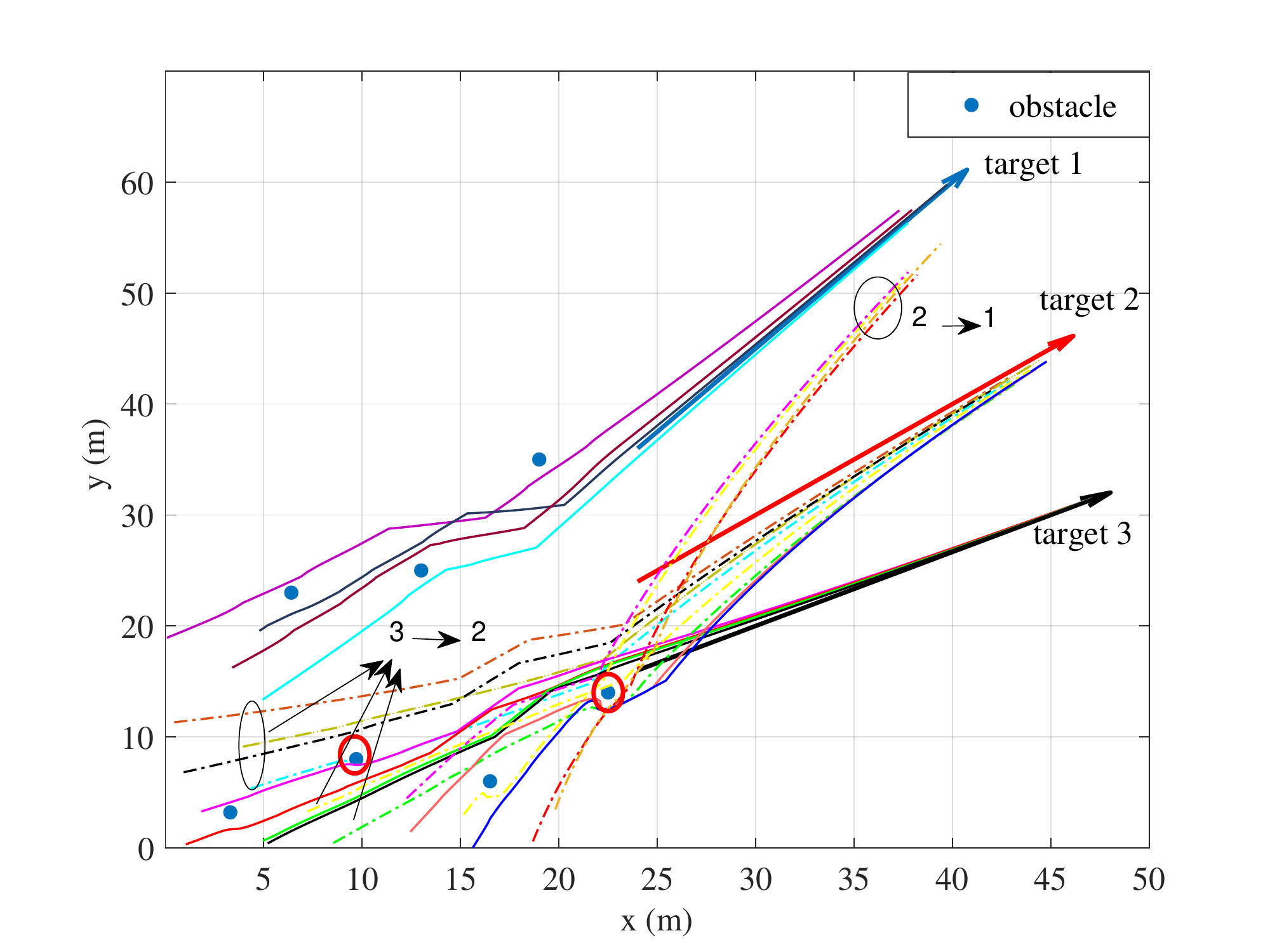}
\label{fig3a}
\end{minipage}%
}%
\quad                 
\subfigure[]{
\begin{minipage}[t]{8cm}
\centering
\includegraphics[width=8cm]{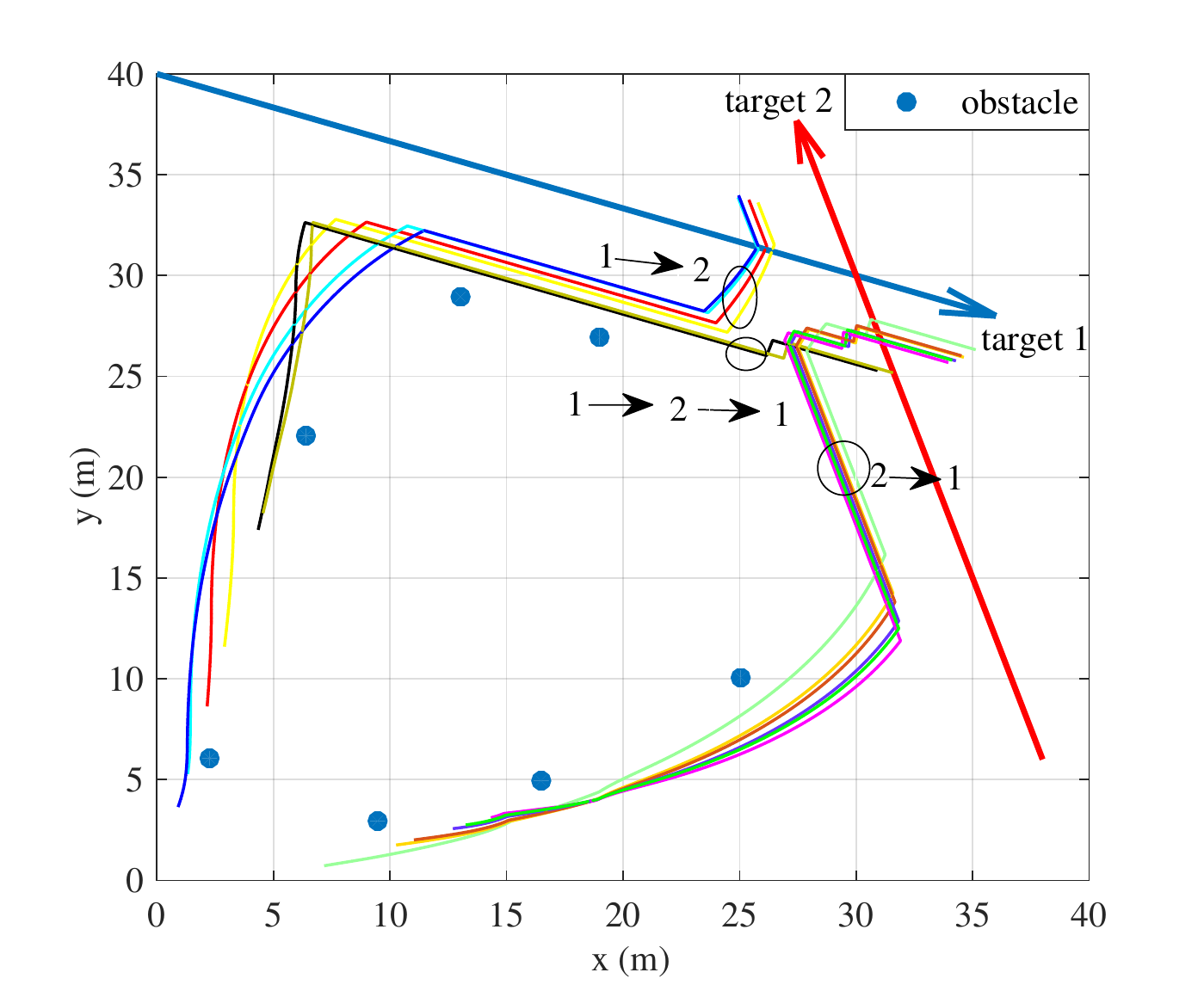}
\label{fig3b}
\end{minipage}
}%
\quad
\centering
\caption{The UAV trajectories under the combination of the CETA algorithm and the JSSCT-APF in two different cases: (a) The three targets fly in different directions. (b) The two targets fly towards each other. The blue dots are obstacles. The straight lines with an arrow are the trajectories of targets. The dashed and solid curves in different colors represent the UAVs' trajectories. $1$ $\rightarrow$ $2$ represents UAVs' tracking target switching from $1$ to $2$.}
\label{fig3}
\end{figure}
Fig. \ref{fig9a} depicts several cross-sections of Fig. \ref{fig8} to better present the changes of the optimal power at some specific energy levels with time. As shown in the figure, the optimal power fluctuates greatly with time, especially at the end of tracking. According to (\ref{optimal power}), the optimal power $\widetilde p^{\rm U}$ is determined by the channel gain between a UAV and its nearest UAV $g_{i,i^{'}}^{\rm U}$, the interference mean field $\overline{I}_{i^{'}}^{\rm U}$, and the jamming mean field $\overline{I}_{i^{'}}^{\rm J}$. Furthermore, we can observe that $g_{i,j}^{\rm U}$, $\overline{I}_{i^{'}}^{\rm U}$, and $\overline{I}_{i^{'}}^{\rm J}$ are correlated with the distance $d(X_{i},X_{i^{'}})$ between a UAV and its nearest UAV, the average interference channel gain $\overline{g}^{\rm U}$, and the average jamming channel gain $ \overline{g}^{\rm J}$, respectively. Therefore, it can be concluded that $d(X_{i},X_{i^{'}})$, $\overline{g}^{\rm U}$, and $ \overline{g}^{\rm J}$ are crucial for $\widetilde p^{\rm U}$. Their variation with time are shown in Fig. \ref{fig9b}, which are used together to explain the fluctuation of the optimal power.
%
\subsection{Performance of Dynamic Collaboration Approach}\label{section4D}
To keep the UAV tracking the nearest target, we propose a criterion for timely target reassociation. As discussed before, when the UAV's current tracking target is not the closest one to the centroid of its corresponding sub-swarm, the target is reassociated. To show the effectiveness of this criterion, the proposed CETA and JSSCT-APF are combined without considering communication and jamming in Fig. \ref{fig3}. It is observed that the UAVs replace their tracking target in time and reach all the targets smoothly\footnote{Note that UAVs do not collide with the obstacle (circled in red), but fly over the outside of the obstacle in Fig. \ref{fig3a}.} in different cases.
\begin{figure}
\vspace{-2.0em}
\centering

\subfigure[]{
\begin{minipage}[t]{8cm}
\centering
\includegraphics[width=8cm]{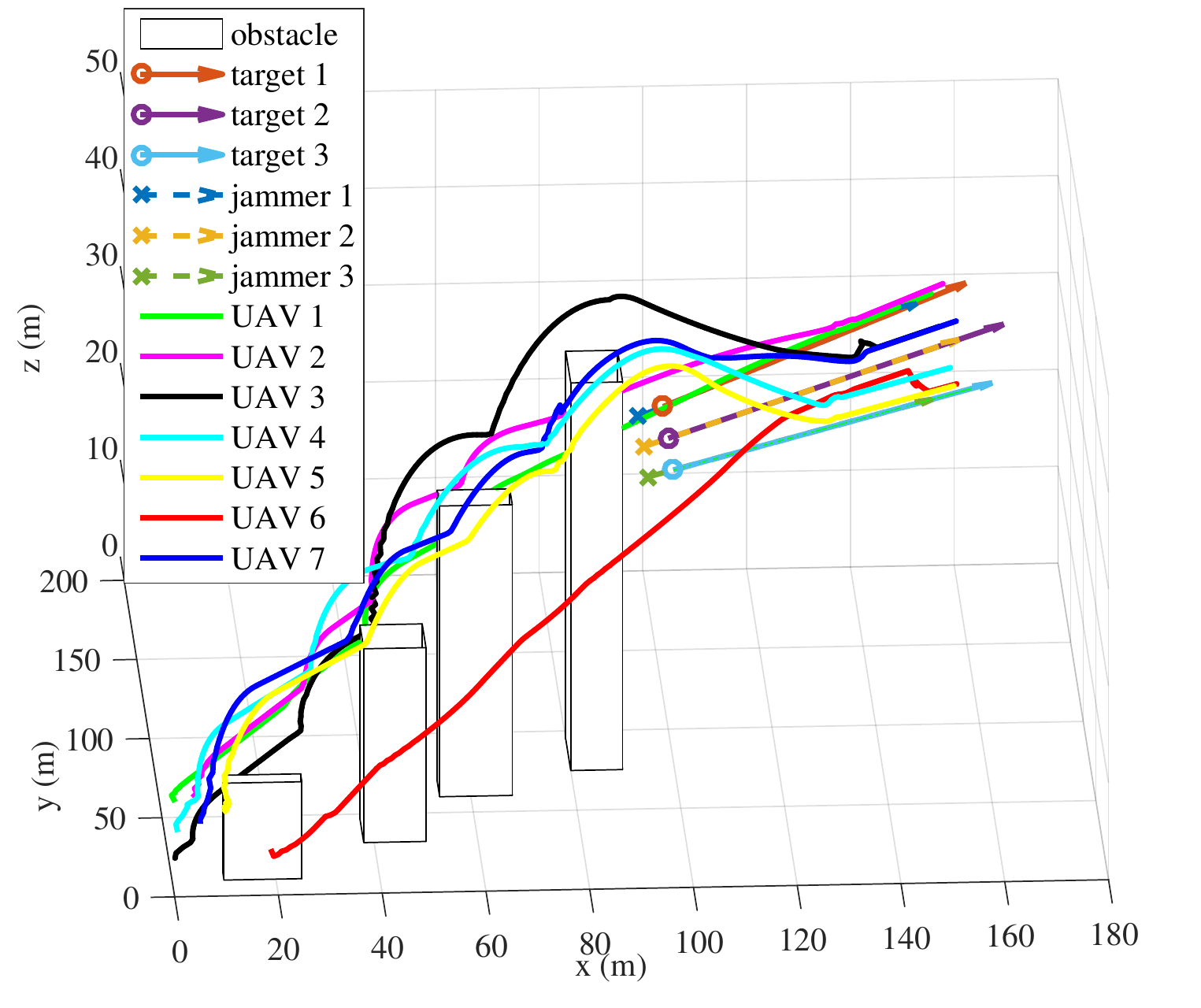}
\label{fig4a}
\end{minipage}%
}%
\quad                 
\subfigure[]{
\begin{minipage}[t]{8cm}
\centering
\includegraphics[width=8cm]{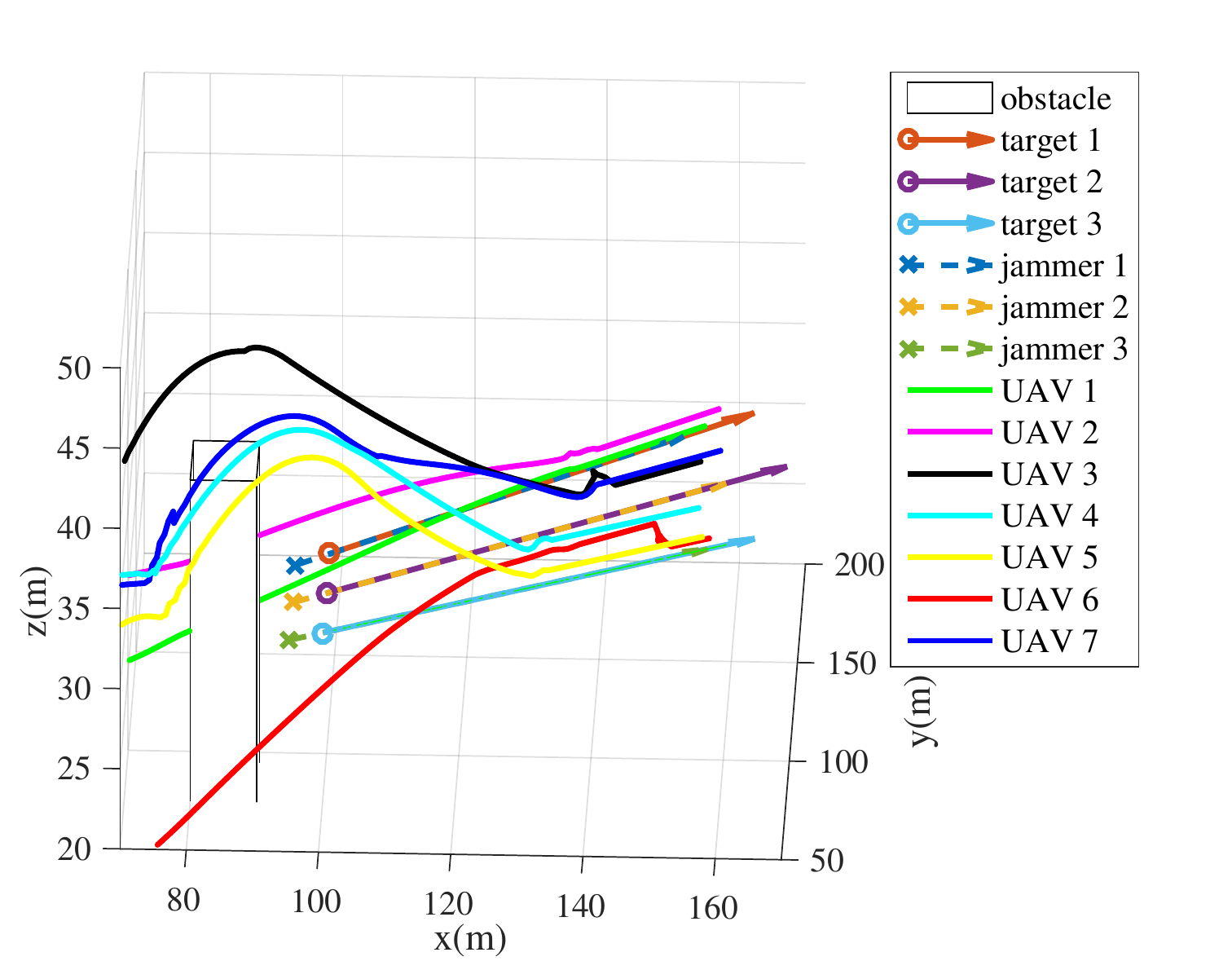}
\label{fig4b}
\end{minipage}
}%
\quad
\centering
\caption{The trajectories of seven UAVs selected from swarm with the CETA algorithm combining the JSSCT-APF. The cuboids are obstacles. The circles represent the starting position of the targets. The x-mark represents the starting position of the jammers. (a) Complete map of UAVs', targets', and jammers' trajectory. (b) Partial enlarged detail of UAVs', targets', and jammers' trajectory in (a).}
\label{fig4}
\end{figure}

Considering the difficulty of clearly showing the trajectory of each UAV owing to the numerous UAVs in the swarm, the trajectories of seven UAVs selected from swarm are shown in Fig. \ref{fig4}. We can observe that UAVs reach the target successfully and are free of the collision with obstacles, other UAVs in swarm, and jammers. Meanwhile, several UAVs change targets during tracking. For example, UAV $3$ tracks target $1$ initially and then target $2$, and UAV $6$ changes its tracking target from target $2$ to target $3$. In order to more clearly exhibit the positional relationship between the target and the jammer, we give a partially enlarged view of the trajectory in Fig. \ref{fig4b}.
\begin{figure}
  \centering
  \includegraphics[width=8cm]{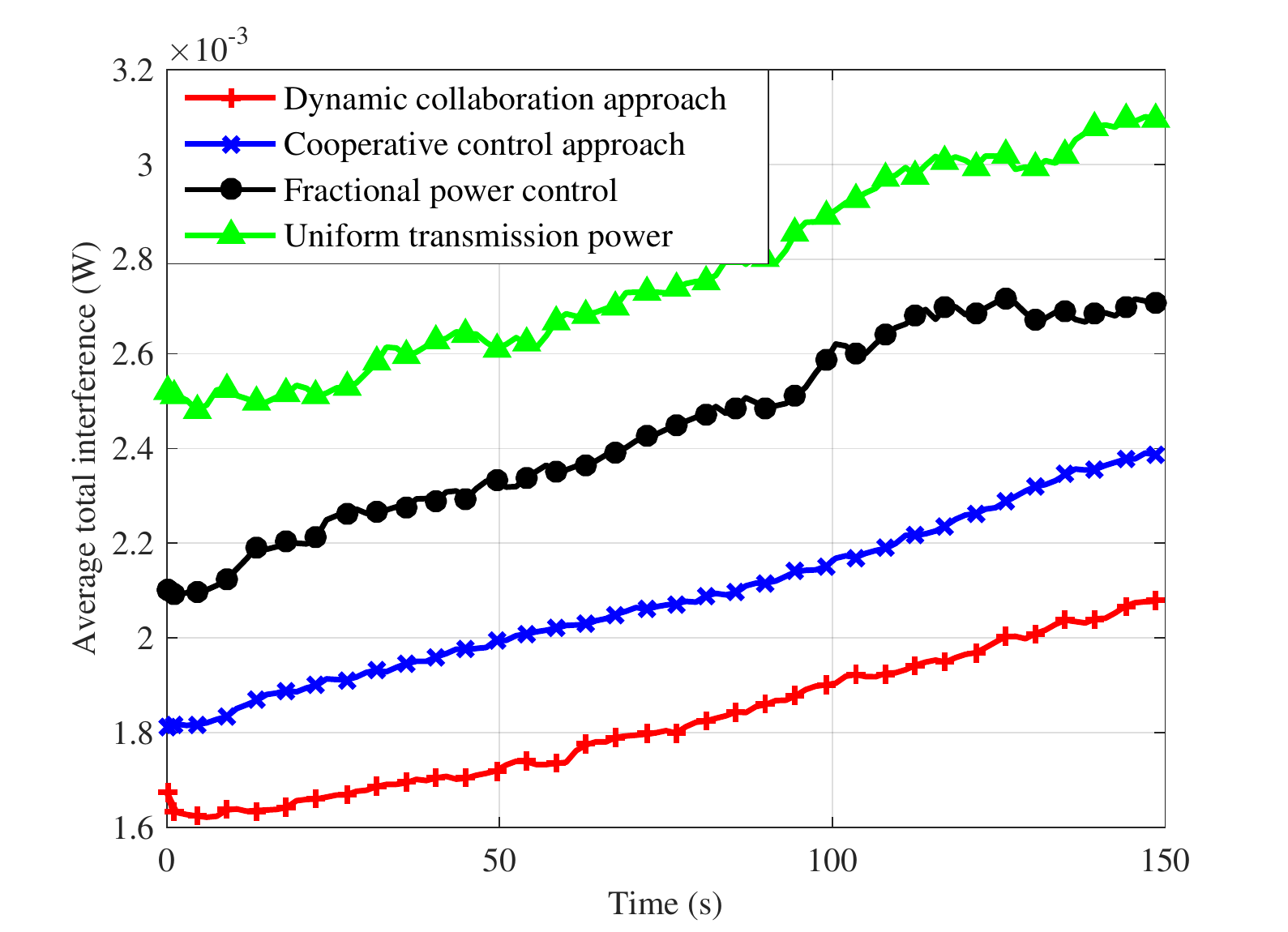}
  \caption{Average total interference of a UAV with different power control algorithms. Cooperative control approach in \cite{06} refers to cooperative control enabled by SCT-APF and MFG without considering jammer, fractional power control in \cite{05} refers to cooperative control of SCT-APF and fractional power control without considering jammer, and uniform transmission power refers to cooperative control of SCT-APF and uniform transmission power without considering jammer.}
  \label{fig10}
\end{figure}

\begin{figure}
  \centering
  \includegraphics[width=8cm]{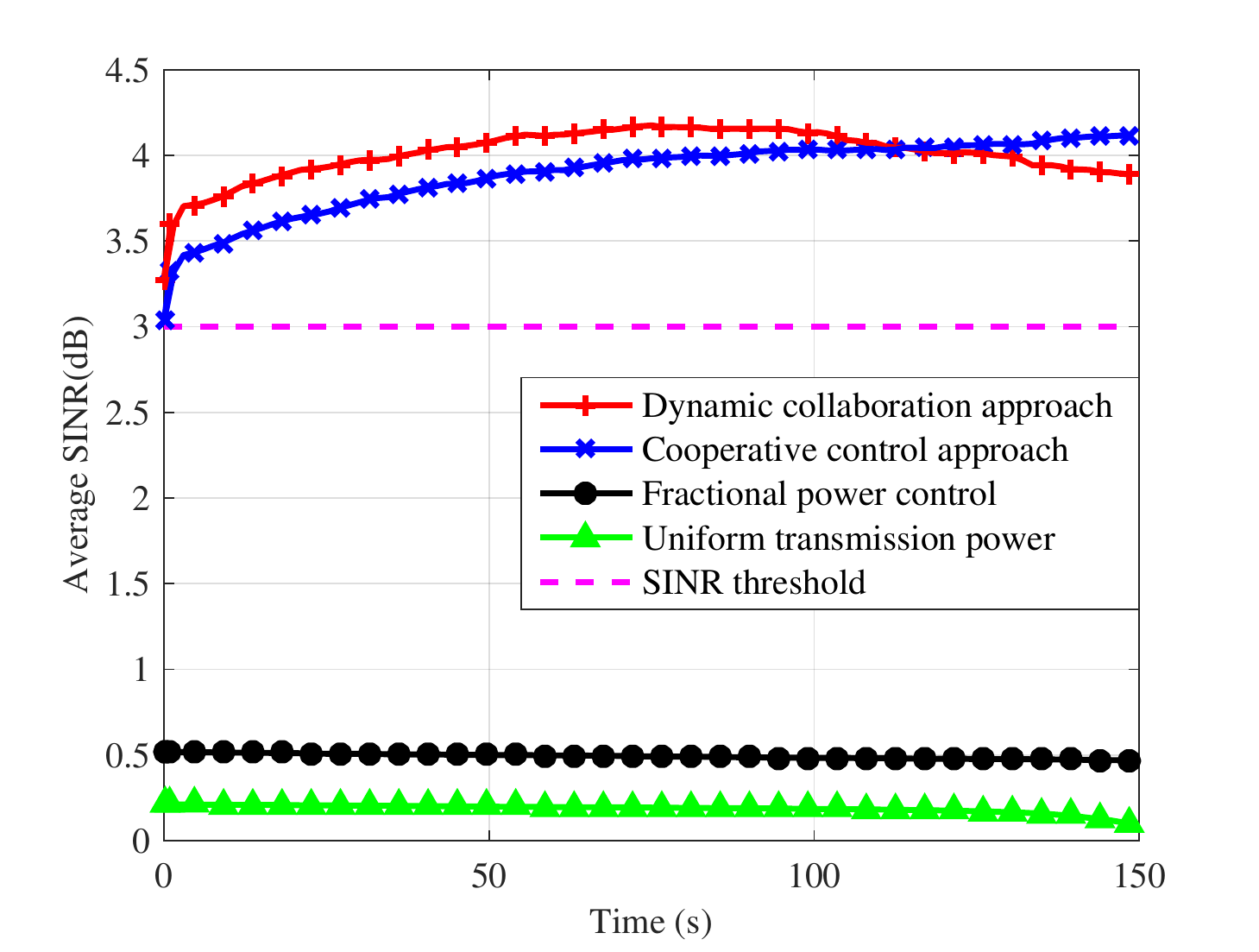}
  \caption{Average SINR of a UAV with different power control algorithms.}
  \label{fig11}
\end{figure}
To assess the performance of the dynamic collaboration approach under malicious jamming, we compare our proposed  algorithm with cooperative control approach in \cite{06}, cooperative control of SCT-APF and fractional power control \cite{05}, and cooperative control of SCT-APF and uniform transmit power policy (${\forall t}\in[0, \lambda T]$, the transmit power $p=\frac{e(0)}{T}$).

Fig. \ref{fig10} reveals the average total interference of a UAV with time under various power control approaches. Similar to Fig. \ref{fig5}, the UAV's average total interference shows an upward trend over time, and our proposed dynamic collaboration approach outperforms benchmarks under malicious jamming. This is because that our proposed dynamic collaboration approach takes the collision with jammers and their jamming interference into consideration, which can relieve the interference effectively in both trajectory and power.

Fig. \ref{fig11} illustrates the average SINR of a UAV with time under various power control approaches. We can see that the UAV's average SINR is larger than the SINR threshold all the time with our proposed dynamic collaboration approach and cooperative control approach in \cite{06}. While for fractional power control in \cite{05} and uniform transmit power policy, the requirement of minimum SINR can not be fulfilled. Besides, the average SINR under our proposed dynamic collaboration approach firstly is larger than that under cooperative control approach in \cite{06}, and then smaller than that under cooperative control approach in \cite{06}. The reason is that the inter-UAV interference and jamming have a sharp increase at the end of tracking as shown in Fig. \ref{fig9b}, and the UAV has to sacrifice the performance of SINR to offset the increased interference to minimize cost function with our proposed dynamic collaboration approach.
\begin{table}
\caption{Performance of Four Algorithms}
\label{results of four algorithms}
\centering
\resizebox{.99\columnwidth}{!}{
\begin{tabular}{ccc}
\hline
Algorithm & Number of steps & {\makecell[c]{Average number of \\target switching}}\\
\hline
dynamic collaboration approach & $1481$ & $2.12$\\
\hline
{\makecell[c]{cooperative control approach in \cite{06}}} & $1746$ & $3.46$\\
\hline
{\makecell[c]{cooperative control of SCT-APF \\and fractional power control}} & $1943$ & $3.9$\\
\hline
{\makecell[c]{cooperative control of SCT-APF \\and uniform transmission power policy}}& $2214$ & $4.1$\\
\hline
\end{tabular}
}
\end{table}

To illustrate the effect of interference on tracking quality, we compared the number of steps and the average number of target switching required by UAVs to track all targets under the above four algorithms in Table~\ref{results of four algorithms}. The results indicate that the dynamic collaboration approach has fewer the number of steps and the average number of target switching compared to benchmarks. Thus, our proposed dynamic collaboration approach with less interference contributes to improving the tracking quality.

\section{Conclusion}\label{section5}
This paper studies target association, trajectory planning, and power control for UAV swarm under malicious jamming, where many UAVs are employed to simultaneously track multiple targets. We deploy the CETA algorithm, JSSCT-APF, and JA-MFG power control scheme to divide UAVs into sub-swarms with the associated target, plan trajectory, and control power, respectively. To minimize the total interference, a dynamic collaboration approach is presented to re-associate the target and replan the trajectory. The experiment results verify that the proposed algorithm helps UAVs reach all targets successfully and ensures communication quality between UAVs. Moreover, the proposed dynamic collaboration algorithm achieves dynamic adjustment of target association, trajectory, and power. Meanwhile, the proposed algorithm considering jammer alleviates the interference by $28\%$, and enhances the tracking quality in tracking steps and target switching times by $33\%$ and $48\%$, respectively, compared to conventional algorithms without considering jammer.

%
\appendices
\section{Proof of Proposition 1}
When each UAV and jammer transmit a predefined test power $p_{1}^{\rm U}(t)$ and $p_{1}^{\rm J}(t)$, the received power at UAV$_{i^{'}}$ would be
\begin{equation}\label{first received power}
\begin{aligned}
p_{1}^{\rm R}(t)&=p_{1}^{\rm U}(t)g_{i,i^{'}}^{\rm U}(t)\!+\!\!\sum\limits_{n\neq{i,i^{'}}}^{N}p_{1}^{\rm U}(t)g_{n,i^{'}}^{\rm U}(t)\!+\!\!\sum\limits_{m=1}^{M}p_{1}^{\rm J}(t)g_{m,i^{'}}^{\rm J}(t)\\
&\approx p_{1}^{\rm U}(t)g_{i,i^{'}}^{\rm U}(t)+(N-2)p_{1}^{\rm U}(t)\overline{g}^{\rm U}(t)+Mp_{1}^{\rm J}(t)\overline{g}^{\rm J}(t).
\end{aligned}
\end{equation}

Similarly, when the predefined test power are $p_{2}^{\rm U}(t)$ and $p_{2}^{\rm J}(t)$, we have
\begin{equation}\label{second received power}
p_{2}^{\rm R}(t)\approx p_{2}^{\rm U}(t)g_{i,i^{'}}^{\rm U}(t)+(N-2)p_{2}^{\rm U}(t)\overline{g}^{\rm U}(t)+Mp_{2}^{\rm J}(t)\overline{g}^{\rm J}(t).
\end{equation}

Solving above two equations simultaneously, we can obtain the average interference channel gain $\overline{g}^{\rm U}$ and the average jamming channel gain $\overline{g}^{\rm J}$ as follows
\begin{equation}\label{average interference channel gain}
\overline{g}^{\rm U}(t)=\frac{\frac{p_{1}^{\rm R}(t)p_{2}^{\rm J}(t)-p_{2}^{\rm R}(t)p_{1}^{\rm J}(t)}{p_{1}^{\rm U}(t)p_{2}^{\rm J}(t)-p_{2}^{\rm U}(t)p_{1}^{\rm J}(t)}-g_{i,i^{'}}^{\rm U}(t)}{N-2},
\end{equation}
\begin{equation}\label{average jamming channel gain}
\overline{g}^{\rm J}(t)=\frac{p_{1}^{\rm R}(t)p_{2}^{\rm U}(t)-p_{2}^{\rm R}(t)p_{1}^{\rm U}(t)}{M(p_{1}^{\rm J}(t)p_{2}^{\rm U}(t)-p_{2}^{\rm J}(t)p_{1}^{\rm U}(t))},
\end{equation}
where $p_{(\cdot)}^{\rm R}(t)$ are measured, and $g_{i,i^{'}}^{\rm U}(t)$ is previously known.

Therefore, we can derive the interference mean field and jamming mean field as shown in (\ref{detail interference mean field}) and (\ref{detail jamming mean field}).
\section{Proof of Proposition 2}
Because of the identity and interchangeability of all agents, the subscript $i$ can be omitted in JA-MFG. Therefore, Hamiltionian in (\ref{Hamiltionian}) can be rewritten as
\begin{equation}\label{rewritten Hamiltionian}
\begin{aligned}
H\left(e(t),\frac{\partial{u(t,e(t))}}{\partial{e}}\right)\!=\!\mathop{\min}_{p^{\rm U}(t)}\left(c(p^{\rm U}(t))-p^{\rm U}(t)\frac{\partial{u(t,e(t))}}{\partial{e}}\right).
\end{aligned}
\end{equation}
Substituting (\ref{mean field cost function}) into (\ref{rewritten Hamiltionian}), we can obtain
\begin{equation}\label{detail rewritten Hamiltionian}
\begin{aligned}
H&=\mathop{\min}_{p^{\rm U}(t)}\left(\omega_1[p^{\rm U}(t)g_{i,i^{'}}^{\rm U}(t)-\gamma_{\rm th}(\overline{I}_{i^{'}}^{\rm U}(t)+\overline{I}_{i^{'}}^{\rm J}(t)+\sigma^2(t))]^{2}\right.\\
&\phantom{=\;\;}\left.+\omega_2(\overline{I}_{i^{'}}^{\rm U}(t)+\overline{I}_{i^{'}}^{\rm J}(t))^{2}-p^{\rm U}(t)\frac{\partial{u(t,e(t))}}{\partial{e}}\right).
\end{aligned}
\end{equation}
Calculating the first derivative of (\ref{detail rewritten Hamiltionian}) w.r.t. $p^{\rm U}$ yields
\begin{equation}\label{first derivatives of detail rewritten Hamiltionian}
\begin{aligned}
\frac{\partial{H}}{\partial{p^{\rm U}}}\!&=\!2\omega_1(p^{\rm U}(t)g_{i,j}^{\rm U}(t)\!-\!\gamma_{\rm th}(\overline{I}_{i^{'}}^{\rm U}(t)+\overline{I}_{i^{'}}^{\rm J}(t)\!+\!\sigma(t)^2))g_{i,j}^{\rm U}(t)\\
&-\frac{\partial{u(t,e(t))}}{\partial{e}}.
\end{aligned}
\end{equation}
Let $\frac{\partial{H}}{\partial{p^{\rm U}}}=0$, and the optimal power in (\ref{optimal power}) can be derived.

\ifCLASSOPTIONcaptionsoff
  \newpage
\fi



%


\bibliographystyle{IEEEtran}
\bibliography{bibi}

%







\end{document}